\theoremstyle{plain}
\newtheorem{observation}{Observation}
\let\doendproof\endproof
\renewcommand\endproof{~\hfill\qed\doendproof}
\renewcommand{\paragraph}[1]{\smallskip\noindent\textbf{#1}\xspace}
\newcommand{\remove}[1]{}
\Crefname{observation}{Observation}{Observations}
\Crefname{algorithm}{Algorithm}{Algorithms}
\Crefname{section}{Section}{Sections}
\Crefname{observation}{Observation}{Observations}
\Crefname{lemma}{Lemma}{Lemmas}
\Crefname{claim}{Claim}{Claims}
\Crefname{figure}{Fig.}{Figs.}
\Crefname{figure}{Fig.}{Figs.}
\Crefname{enumi}{Condition}{Conditions}
\Crefname{property}{Property}{Properties}
\definecolor{realblue}{rgb}{0,0,1}
\definecolor{blue}{rgb}{0.274,0.392,0.666}
\definecolor{darkerblue}{rgb}{0.094,0.455,0.804}
\definecolor{darkblue}{rgb}{0.063,0.306,0.545}
\definecolor{red}{rgb}{0.627,0.117,0.156}
\definecolor{green}{rgb}{0,0.588,0.509}
\definecolor{orange}{rgb}{0.903,0.739,0.382}
\definecolor{realred}{rgb}{1,0,0}
\newcommand{\darkblue}[1]{{{\textcolor{darkblue}{#1}\xspace}}}
\newcommand{\acks}{
	This research was supported by MIUR Proj.\ ``MODE'' n$^\circ$ 20157EFM5C, 
	by MIUR Proj.\ ``AHeAD'' n$^\circ$ 20174LF3T8, 
	by MIUR-DAAD JMP n$^\circ$ 34120,
	by H2020-MSCA-RISE Proj.\ ``CONNECT'' n$^\circ$ 734922, and
	by Roma Tre University Proj.\ ``GeoView''.\xspace
}
\newcommand{\etal}{{\em et~al.}\xspace}
\renewcommand{\emph}[1]{\darkblue{\em #1}}
\begin{document}
%
\title{Graph Stories in Small Area\thanks{\acks}}

%
\author{Manuel Borrazzo \and Giordano {Da Lozzo} \and Giuseppe {Di Battista} \and Fabrizio Frati \and Maurizio Patrignani}
\authorrunning{M. Borrazzo \etal}
\institute{Roma Tre University, Rome, Italy\\ 
\href{mailto:manuel.borrazzo@uniroma3.it,giordano.dalozzo@uniroma3.it,fabrizio.frati@uniroma3.it,maurizio.patrignani@uniroma3.it}{firstname.lastname@uniroma3.it}}
\maketitle             
\begin{abstract}
We study the problem of drawing a dynamic graph, where each vertex appears in the graph at a certain time and remains in the graph for a fixed amount of time, called the window size. 
This defines a graph story, i.e., a sequence of subgraphs, each induced
by the vertices that are in the graph at the same time.
The drawing of a graph story is a sequence of drawings of such subgraphs. 
To  support readability, we require that
each drawing is straight-line and planar and that each vertex maintains its placement in all the drawings.
Ideally, the area of the drawing of each subgraph should be a function of the window size, rather than a function of the size of the entire graph, which could be too large. 
We show that the graph stories of paths and trees can be drawn on a $2W \times 2W$ and on an $(8W+1) \times (8W+1)$ grid, respectively, where $W$ is the window size. These results are constructive and yield linear-time algorithms.
Further, we show that there exist graph stories of planar graphs whose subgraphs cannot be drawn within an area that is only a~function~of~$W$.
\end{abstract}

\section{Introduction}\label{se:intro}



We consider a graph that changes over time. Its vertices enter the graph one after the other and persist in the graph for a fixed amount of time, called the \emph{window size}. We call such a dynamic graph a \emph{graph~story}. More formally, let $V$ be the set of vertices of a graph $G$. Each vertex $v \in V$ is equipped with a label~$\tau(v)$, which specifies the time instant at which $v$ appears in the graph. The labeling $\tau: V \rightarrow \{1,2,\dots, |V|\}$ is a bijective function specifying a total ordering~for~$V$.
At any time $t$, the graph $G_t$ is the subgraph of $G$ induced by the set of vertices $\{v \in V: t-W < \tau(v)\leq t\}$. We denote a graph story by $\langle G, \tau, W \rangle$.

We are interested in devising an algorithm for visualizing graph stories. The input of the algorithm is an entire graph story and the output is what we call a drawing story. A \emph{drawing story} is a sequence of drawings of the graphs $G_t$. The typical graph drawing conventions can be applied to a drawing story. E.g., a drawing story is planar, straight-line, or on the grid if all its drawings are planar, straight-line, or on the grid, respectively. 

A trivial way for constructing a drawing story would be to first produce a drawing of $G$ and then to obtain a drawing of $G_t$, for each time $t$, by filtering out vertices and edges that do not belong to $G_t$. However, if the number of vertices of $G$ is much larger than $W$, this strategy might produce unnecessarily large drawings. Ideally, the area of the drawing of each graph $G_t$ should be a (polynomial) function of $W$, rather than a function of the size of the entire graph.

In this paper we show that the graph stories of paths and trees can be drawn on a $2W \times 2W$ and on an $(8W+1) \times (8W+1)$ grid, respectively, so that all the drawings of the story are straight-line and planar, and so that vertices do not change their position during the drawing story. Further, we show that there exist graph stories of planar graphs that cannot be drawn within an area that is only a function of $W$, if planarity is required and vertices are not allowed to change their position during the drawing story.


The visualization of dynamic graphs is a classic research topic in graph drawing. In what follows we compare our model and results with the literature. 
We can broadly classify the different approaches in terms of the following features~\cite{DBLP:conf/vissym/0001B0W14}. 
\begin{inparaenum}[(i)]
	\item The objects that appear and disappear over time can be vertices or edges.
	\item The lifetime of the objects may be fixed or variable.
	\item The story may be entirely known in advance ({\em off-line model}) or not ({\em on-line model}).
\end{inparaenum}
In this paper, the considered objects are vertices, the lifetime is fixed and the model is off-line. 

A considerable amount of the literature on the theoretical aspects of dynamic graphs focuses on trees. In~\cite{DBLP:journals/ipl/BinucciBBDGPPSZ12}, the objects are edges, the lifetime $W$ is fixed, and the model is on-line; an algorithm is shown for drawing a tree in $O(W^3)$ area, under the assumption that the edges arrive in the order of a Eulerian tour of the tree. In~\cite{DBLP:conf/gd/DemetrescuBFLPP99}, the objects are vertices, the lifetime $W$ is fixed, the model is off-line (namely, the sequence of vertices is known in advance, up to a certain threshold $k$), and the vertices can move. A bound in terms of $W$ and $k$ is given on the total amount of movement of the vertices. In~\cite{DBLP:conf/gd/SkambathT16}, each subgraph of the story is given (each subgraph is a tree, whereas the entire graph may be arbitrary), each object can have an arbitrary lifetime, the model is off-line, and the vertices can move. Aesthetic criteria as in the classical Reingold-Tilford algorithm~\cite{DBLP:journals/tse/ReingoldT81} are pursued.


Other contributions consider more general types of graphs. In~\cite{DBLP:conf/gd/GoodrichP13a}, the objects are edges, which enter the drawing and never leave it, the model is on-line, the considered graphs are outerplanar, and the vertices are allowed to move by a polylogarithmic distance. In~\cite{DBLP:conf/ciac/LozzoR15}, the objects are edges, the lifetime is fixed, and the model is off-line; NP-completeness is shown for the problem of computing planar topological drawings of the graphs in the story; other results for the topological setting are presented in~\cite{DBLP:journals/algorithmica/AngeliniB19,DBLP:conf/gd/Schaefer14}.  

Further related results appear in~\cite{DBLP:journals/siamcomp/CohenBTT95,DBLP:journals/siamcomp/BattistaT96,DBLP:journals/iandc/BattistaTV01,DBLP:reference/algo/Italiano16e,DBLP:conf/stoc/Poutre94,DBLP:conf/gd/RextinH08}; in particular, geometric simultaneous embeddings~\cite{DBLP:reference/crc/BlasiusKR13,DBLP:journals/comgeo/BrassCDEEIKLM07} are closely related to the setting we consider in this paper. Contributions focused on the information-visualization aspects of dynamic graphs are surveyed in~\cite{DBLP:conf/vissym/0001B0W14}; further, a survey on temporal graph problems appears in~\cite{DBLP:journals/im/Michail16}.




\section{Preliminaries}
In this section, we present definitions and preliminaries.	

\paragraph{Graphs and drawings.}  We denote the set of vertices and edges of a graph $G$ by $V(G)$ and $E(G)$, respectively.  A \emph{drawing} of $G$ maps each vertex in~$V(G)$ to a distinct point in the plane and each edge in $E$ to Jordan arc between its end-points. A drawing is \emph{straight-line} if each arc is a straight-line segment, it is \emph{planar} if no two arcs intersect, except at a common endpoint, and it is \emph{on the grid} (or, a \emph{grid drawing}) if each vertex is mapped to a point with integer coordinates.


\paragraph{Rooted ordered forests and their drawings.}
A \emph{rooted tree} $T$ is a tree with one distinguished vertex, called \emph{root} and denoted by $r(T)$. For any vertex $v \in V(T)$ with $v\neq r(T)$, consider the unique path $\mathcal P_v$ between $v$ and $r(T)$ in~$T$; the \emph{ancestors} of $v$ are the vertices of $\mathcal P_v$, and the \emph{parent} $p(v)$ of $v$ is the~neighbor~of~$v$~in~$\mathcal P_v$. For any two vertices $u,v\in V(T)$, the \emph{lowest common ancestor of} $u$ \emph{and} $v$ is the ancestor of $u$ and $v$ whose graph-theoretic distance from $r(T)$ is maximum. For any vertex $v\in V(T)$ with $v\neq r(T)$, the \emph{children} of $v$ are the neighbors of $v$ different from $p(v)$; the \emph{children} of $r(T)$ are all its neighbors. We denote by $T(u)$ the subtree of $T$ rooted at a node $u$.

A \emph{rooted ordered tree}  $T$ is a rooted tree such that, for each vertex $v\in V(T)$, a \emph{left-to-right} (linear) order $u_1,\dots,u_k$ of the children of $v$ is specified. A sequence $\mathcal F = T_1,T_2,\dots,T_k$ of rooted ordered trees is a \emph{rooted ordered forest}. 

A \emph{strictly-upward} drawing of a rooted tree $T$ is such that each edge is represented by a curve monotonically increasing in the \mbox{$y$-direction} from a vertex to its parent.
A \emph{strictly-upward} drawing $\Gamma$ of a rooted forest $\mathcal F$ is such that the drawing of each tree $T_i \in \mathcal F$ in $\Gamma$ is strictly-upward.
A strictly-upward drawing of an ordered tree $T$ is \emph{order-preserving} if, for each vertex $v\in V(T)$, the left-to-right order of the edges from $v$ to its children in the drawing is the same as the order associated with $v$ in $T$. A strictly-upward drawing of a rooted ordered forest $\mathcal F$ is \emph{order-preserving} if the drawing of~each tree $T_i \in \mathcal F$ is order-preserving. 
The definitions of (\emph{order-preserving}) \emph{strictly-leftward}, \emph{strictly-downward}, and \emph{strictly-rightward} drawings of (ordered) rooted trees and forests are similar.

\paragraph{Geometry}.
Given two points $p_1$ and $p_2$ in $\mathbb{R}^2$, we denote by $[p_1,p_2]$ the closed straight-line segment connecting $p_1$ and $p_2$.
Given two closed intervals $[a,b]$ and $[c,d]$, where $a$, $b$, $c$, and $d$ are integers with $a<b$ and $c < d$, the \emph{(integer) grid $[a,b] \times [c,d]$} is the set of points given by the Cartesian product $([a,b] \cap \mathbb{Z}) \times ([c,d] \cap \mathbb{Z})$.
The \emph{width} and \emph{height} of $[a,b] \times [c,d]$ are $|[a,b] \cap \mathbb{Z}|=b-a+1$ and $|[c,d] \cap \mathbb{Z}|=d-c+1$, respectively.
A grid drawing $\Gamma$ of a graph \emph{lies on a} $W\times H$ grid if $\Gamma$ is enclosed by the bounding box of some grid of width $W$ and height $H$, and \emph{lies on the} grid $[a,b] \times [c,d]$ if $\Gamma$ is enclosed by the bounding box of the grid  $[a,b] \times [c,d]$.

\paragraph{Graph stories.}
A graph story $\langle G, \tau, W \rangle$ is naturally \emph{associated with} a sequence $\langle G_1, G_2, \dots, G_{n+W-1} \rangle$; for any $t \in \{1,\dots,n+W-1\}$, the graph $G_t$ is the subgraph of $G$ induced by the set of vertices $\{v \in V: t-W < \tau(v)\leq t\}$. Clearly, $|V(G_t)| \leq  W$. Note that $G_1,G_2,\dots,G_{W-1}$ are subgraphs of $G_W$ and $G_{n+1},G_{n+2},\dots,G_{n+W-1}$ are subgraphs of $G_n$, while each of $G_t$ and $G_{t+1}$ has a vertex that the other graph does not have, for $t=W,\dots,n-1$. A graph story $\langle G, \tau, W\rangle$ in which $G$ is a planar graph, a path, or a tree is a \emph{planar graph story}, a \emph{path story}, or a \emph{tree story}, respectively.

A \emph{drawing story} for $\langle G, \tau, W \rangle$ is a sequence $\langle \Gamma_1, \Gamma_2, \dots, \Gamma_{n+W-1} \rangle$ of drawings such that, for every $t=1, \dots, n+W-1$:
\begin{enumerate}[(i)]
	\item $\Gamma_t$ is a drawing of $G_t$,
	\item a vertex $v$ is drawn at the same position in all the drawings $\Gamma_t$ such that $v \in V(G_t)$, and
	\item an edge $(u,v)$ is represented by the same curve in all the drawings $\Gamma_t$ such that $(u,v) \in E(G_t)$.
\end{enumerate}
The above definition implies that the drawings $\Gamma_1,\Gamma_2,\dots,\Gamma_{W-1}$ are the restrictions of $\Gamma_W$ to the vertices and edges of $G_1,G_2,\dots,G_{W-1}$, respectively, and that the drawings $\Gamma_{n+1},\Gamma_{n+2},\dots,\Gamma_{n+W-1}$ are the restrictions of $\Gamma_n$ to the vertices and edges of $G_{n+1},G_{n+2},\dots,G_{n+W-1}$, respectively. Hence, an algorithm that constructs a drawing story only has to specify the drawings $\Gamma_{W},\Gamma_{W+1},\dots,\Gamma_{n}$. 

In the remainder, we only consider drawing stories $\Gamma$ that are planar, straight-line, and on the grid. Storing each drawing in $\Gamma$ explicitly would require $\Omega(n \cdot W)$ space in total. However, since each vertex maintains the same position in all the drawings where it appears, since edges are straight-line segments, and since any two consecutive graphs in a graph story only differ by $O(1)$ vertices, we can encode $\Gamma$ in total $O(n)$ space.

Let $\Gamma$ be a straight-line drawing story of a graph story $\langle G, \tau, W \rangle$ and let $G'$ be a subgraph of $G$ (possibly $G'=G$). The \emph{drawing of $G'$ induced by $\Gamma$} is the straight-line drawing of $G'$ in which each vertex has the same position as in every drawing $\Gamma_t\in \Gamma$ where it appears. Note that the drawing of $G'$ induced by $\Gamma$ might have crossings even if $\Gamma$ is planar. For a subset $B \subseteq V(G)$, let $G[B]$ be the subgraph of $G$ induced by the vertices in $B$ and let $\Gamma[B]$ be the straight-line drawing of $G[B]$ induced by $\Gamma$.

Given a graph story $\langle G, \tau, W \rangle$, we will often consider a partition of $V$ into \emph{buckets} $B_1, \dots, B_h$, where $h=\lceil \frac{n}{W}\rceil$. For $i=1,\dots,h$, the bucket $B_i$ is the set of vertices $v$ such that $(i-1)W+1 \leq \tau(v) \leq \min \{i\cdot W,n\}$. Note that all buckets have $W$ vertices, except, possibly, for $B_h$.  We have the following useful property.

\begin{property}\label{prop:supergraph}
	For any $t=1,2,\dots,n+W-1$, let $i=\lceil \frac{t}{W}\rceil$. Then $G_t$ is a subgraph of $G[B_{i-1} \cup B_i]$.
\end{property}

\begin{proof}
We have $B_{i-1} \cup B_i=\{v\in V(G): (i-2)W+1 \leq \tau(v) \leq \min (i\cdot W,n)\}$, which is a superset of $V(G_t)=\{v\in V(G): t-W < \tau(v) \leq \min (t,n)\}$, given that $(i-2)W+1 \leq t-W$ and that $i\cdot W \geq t$. 
\end{proof}  

\section{Planar Graph Stories}\label{se:planar}

In this section we prove a lower bound on the size of any drawing story of a planar graph story. As common in the literature about small-area graph drawings, such a lower bound is achieved by means of the nested triangles graph. 


Let $n\equiv 0 \mod 3$. An $n$-vertex \emph{nested triangles graph} $G$ contains the vertices and the edges of the $3$-cycle $C_{i} = (v_{i-2},v_{i-1},v_i)$, for $i=3,6,\dots,n$, plus the edges $(v_i,v_{i+3})$, for $i=1,2,\dots,n-3$. The nested triangles graphs with $n\geq 6$ vertices are $3$-connected and thus they have a unique combinatorial embedding (up to a flip)~\cite{Whitney33}. We have the following.

\begin{theorem}\label{th:lowerbound}
	Let $\langle G, \tau, 9\rangle$ be a planar graph story such that $G$ is an $n$-vertex nested triangles graph and $\tau(v_i)=i$. Then any drawing story of $\langle G, \tau, 9\rangle$ lies on a $\Omega(n) \times \Omega(n)$ grid.
\end{theorem}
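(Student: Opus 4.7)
The plan is to reduce the problem to a classical area lower bound for nested lattice triangles. I will show that, in the induced drawing of $G$ determined by any drawing story of $\langle G,\tau,9\rangle$, the triangular cycles $C_3,C_6,\ldots,C_n$ are pairwise geometrically nested in a common direction, and then deduce that both the width and the height of the drawing are $\Omega(n)$.

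The first step identifies a convenient $3$-connected subgraph in each window. For every multiple $j$ of $3$ with $3\le j\le n-6$, the window at time $t=j+6$ is induced by exactly the nine consecutive vertices $v_{j-2},\ldots,v_{j+6}$ and therefore contains the three triangles $C_j$, $C_{j+3}$, $C_{j+6}$ together with all six connecting edges $(v_i,v_{i+3})$ for $i\in\{j-2,\ldots,j+3\}$. In other words, $G_{j+6}$ is the $9$-vertex nested triangles graph on $C_j$, $C_{j+3}$, $C_{j+6}$. This graph is $3$-connected, so by Whitney's theorem its planar embedding is unique up to a flip; consequently, in the planar straight-line drawing $\Gamma_{j+6}$ the three cycles form a totally nested chain, with either $C_j$ innermost and $C_{j+6}$ outermost, or vice versa.

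The second step propagates the nesting direction as $j$ grows. Since a drawing story assigns each vertex a single position throughout its lifetime, the geometric containment between the planar regions bounded by $C_{j+3}$ and $C_{j+6}$ is determined by the coordinates of the six involved vertices alone, and is therefore the same in $\Gamma_{j+6}$ and in $\Gamma_{j+9}$. Hence the orientation of the nested chain in $\Gamma_{j+9}$ must agree with that in $\Gamma_{j+6}$. Iterating from $j=3$ upwards, the triangles $C_3,C_6,\ldots,C_n$ end up being pairwise strictly nested as planar regions, all in a common direction; up to a global flip, $C_3\subsetneq C_6\subsetneq\cdots\subsetneq C_n$. Crucially, this is a statement about vertex coordinates only, and does not require the induced drawing of $G$ as a whole to be planar.

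The third step is a standard shrinking estimate for nested lattice triangles. If $T'$ lies strictly in the interior of $T$ and both have integer vertices, then every vertex of $T'$ has $x$-coordinate strictly greater than the minimum $x$-coordinate of $T$ and strictly less than the maximum; since all coordinates are integers, each of these differences is at least $1$, and the analogous statement holds for $y$. Hence both the width and the height of $T'$ are at most those of $T$ minus $2$. Applying this to the chain of $n/3$ strictly nested triangles $C_3\subsetneq\cdots\subsetneq C_n$ yields that the bounding box of $C_n$ has width and height both at least $2(n/3-1)=\Omega(n)$, which proves the claim. The step I expect to require the most care is the second one: one has to rule out that the two possible flips of the unique embeddings of two consecutive $3$-connected windows can be chosen independently, which is precisely where the global rigidity of vertex positions across the whole story—rather than a merely local planarity per window—is essential.
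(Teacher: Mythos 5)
There is a genuine gap in your first step. Whitney's theorem gives you uniqueness of the \emph{spherical} (combinatorial) embedding of the $3$-connected $9$-vertex window graph up to reflection, but it does not determine which face is the outer face of the planar drawing $\Gamma_{j+6}$, and your conclusion that ``the three cycles form a totally nested chain, with either $C_j$ innermost and $C_{j+6}$ outermost, or vice versa'' only holds when the outer face is one of the two triangular faces. The window graph also has six quadrilateral faces, and any of them can be the outer face of a planar straight-line drawing. If, say, the outer face is a quadrilateral between $C_j$ and $C_{j+3}$, then the bounded interior of $C_j$ is just its own triangular face and the bounded interior of $C_{j+3}$ is the side containing $C_{j+6}$; hence $C_{j+6}$ is nested inside $C_{j+3}$ while $C_j$ is disjoint from both. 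So the three triangles of a window need not be totally nested, and the premise on which your propagation step relies is false. Your step 2, as written, only synchronizes the orientation of windows that are already known to be totally nested; it does not exclude the quadrilateral-outer-face windows, which act as ``reversal points'' where the chain of containments changes direction.

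The damage is repairable but requires an extra argument you have not supplied: a short case analysis shows that, given the fixed positions of the shared triangles, a reversal can occur in at most one window, so the triangles split into at most two oppositely oriented nested chains, one of which has $\Omega(n)$ members, and your integer-shrinking estimate then still yields the $\Omega(n)\times\Omega(n)$ bound. The paper avoids this issue entirely by taking a different route: it proves by induction that the straight-line drawing of the \emph{whole} graph $G$ induced by the story is planar (maintaining the invariant that $C_m$ bounds \emph{a} face --- possibly the unbounded one --- of the partial drawing, so no claim about geometric nesting is ever needed), and then invokes the known $\Omega(n)\times\Omega(n)$ area lower bound for planar straight-line grid drawings of nested triangles graphs. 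Your approach has the merit of being self-contained (it reproves the area bound from scratch via the shrinking argument), but until you handle the choice of outer face it does not go through.
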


\begin{proof}
	In order to prove the statement we show that, for any drawing story $\Gamma$ of $\langle G, \tau, 9\rangle$, the straight-line drawing of $G$ induced by $\Gamma$ is planar. Then the statement follows from well-known lower bounds in the literature~\cite{DBLP:journals/combinatorica/FraysseixPP90,Dolev,DBLP:conf/gd/FratiP07}.
	
	Let $\Gamma = \Gamma_1,\Gamma_2,\dots,\Gamma_{n+8}$. Note that, for any $m=9,12,\dots,n$, the graph $G_m$ contains the cycles $C_{m-6}=(v_{m-8},v_{m-7},v_{m-6})$, $C_{m-3}=(v_{m-5},v_{m-4},v_{m-3})$, and $C_{m}=(v_{m-2},v_{m-1},v_{m})$. For any $m=9,12,\dots,n$, let $H_m$ and $I_m$ be the subgraphs of $G$ induced by $v_1,v_2,\dots,v_{m}$ and by $v_{m-5},v_{m-4},\dots,v_{m}$, respectively, and let $\Gamma^H_m$ and $\Gamma^I_m$ be the drawings of $H_m$ and $I_m$ induced by $\Gamma$, respectively.
	
	Since $G=H_n$, in order to show that the drawing of $G$ induced by $\Gamma$ is planar, it suffices to show that $\Gamma^H_m$ is planar, for each $m=9,12,\dots,n$. We prove this statement by induction on $m$. The induction also proves that the cycle $C_m$ bounds a face of $\Gamma^H_m$ and of $\Gamma^I_m$. 
	
	
	Suppose first that $m=9$. Then $H_9=G_9$ and the drawing $\Gamma^H_9=\Gamma_9$ of $G_9$ is planar since $\Gamma$ is a planar straight-line drawing of $\langle G, \tau, 9\rangle$ and $\Gamma_9 \in \Gamma$. The $3$-connectivity of $H_9$ and $I_9$ implies that $C_9$ bounds a face of $\Gamma^H_9$ and $\Gamma^I_9$. 
	
	
	Suppose now that $m>9$. We show that $\Gamma^H_m$ is planar. By induction, $\Gamma^H_{m-3}$ is planar. Thus, we only need to prove that no crossing is introduced by placing the vertices $v_{m-2}$, $v_{m-1}$, and $v_m$, which belong to $H_m$ and not to $H_{m-3}$, in $\Gamma^H_{m-3}$ as they are placed in $\Gamma_m$ and by drawing their incident edges as straight-line segments. First, by induction, the cycle $C_{m-3}=(v_{m-5},v_{m-4},v_{m-3})$ bounds a face of $\Gamma^H_{m-3}$ and of $\Gamma^I_{m-3}$. Second, the vertices $v_{m-2}$, $v_{m-1}$, and $v_{m}$, as well as their incident edges, lie inside such a face in $\Gamma_m$. Namely, $v_{m-2}$, $v_{m-1}$, and $v_{m}$ lie inside the same face of $\Gamma^I_{m-3}$ in $\Gamma_m$, as otherwise the cycle $C_m$ would cross edges of $I_{m-3}$ in $\Gamma_m$; further, the face of $\Gamma^I_{m-3}$ in which  $v_{m-2}$, $v_{m-1}$, and $v_{m}$ lie in $\Gamma_m$ is incident to all of $v_{m-5}$, $v_{m-4}$, and $v_{m-3}$, as otherwise the edges $(v_{m-5},v_{m-2})$, $(v_{m-4},v_{m-1})$, and $(v_{m-3},v_{m})$ would cross edges of $I_{m-3}$ in $\Gamma_m$; however, no face of $\Gamma^I_{m-3}$ other than the one delimited by $C_{m-3}$ is incident to all of $v_{m-5}$, $v_{m-4}$, and $v_{m-3}$. This proves the planarity of $\Gamma^H_{m}$. The induction and the proof of the theorem are completed by observing that $C_{m}$ bounds a face of $\Gamma^H_m$ and $\Gamma^I_m$.
\end{proof}

\section{Path Stories}\label{se:paths}

Let $\langle G, \tau, W\rangle$ be a path story, where $G=(v_1, v_2, \dots, v_n)$.  Note that the ordering of $V(G)$ given by the path is, in general, different from the ordering given by $\tau$.  


The \emph{$x$-buckets} of $G$ are the sets $B^x_i$, with $i = 1, \dots, \lceil\frac{h+1}{2}\rceil$, such that $B^x_1=B_1$, and $B^x_i = B_{2i-2} \cup B_{2i-1}$, for $i = 2, \dots, \lceil\frac{h+1}{2}\rceil$. Note that each $x$-bucket has $2W$ vertices, except for $B_1$ and, possibly, the last $x$-bucket; see \cref{fig:xy-buckets}.
The \emph{$y$-buckets} of $G$ are the sets $B^y_j$, with $j = 1, \dots, \lceil\frac{h}{2}\rceil$, such that  $B^y_j = B_{2j-1} \cup B_{2j}$. Note that each $y$-bucket has $2W$ vertices, except possibly for the last $y$-bucket; see \cref{fig:xy-buckets}.
Also, each vertex belongs to exactly one $x$-bucket and to exactly one $y$-bucket.

\begin{figure}[tb!]
\centering
\includegraphics[width=\textwidth]{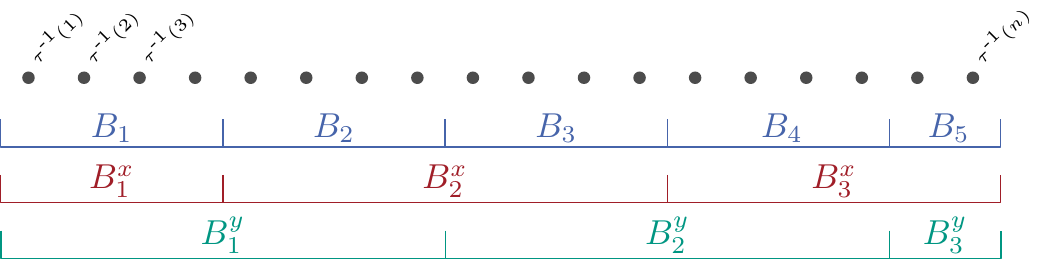}
\caption{Examples of buckets, $x$-buckets, and $y$-buckets when $W=4$.}\label{fig:xy-buckets}
\end{figure}

The following theorem is the contribution of this section; its proof is similar in spirit to the proof that any two paths admit a simultaneous geometric embedding~\cite{DBLP:journals/comgeo/BrassCDEEIKLM07}. 

\begin{theorem}\label{th:paths}
For any path story $\langle G, \tau, W \rangle$, it is possible to compute in $O(n)$ time a drawing story that is planar, straight-line, and lies on a~$2W \times 2W$~grid.
\end{theorem}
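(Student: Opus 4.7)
The plan is to exploit the fact that, by \cref{prop:supergraph}, every $G_t$ sits inside some $B_{i-1} \cup B_i$, and a direct check against the definitions shows that this union equals an $x$-bucket $B^x_k$ when $i$ is odd and a $y$-bucket $B^y_l$ when $i$ is even (with the degenerate case $i=1$ giving $B_1=B^x_1$). So the strategy is to choose coordinates under which, for every $k$ and $l$, the induced drawings of both $G[B^x_k]$ and $G[B^y_l]$ are planar and lie on a $2W\times 2W$ grid; planarity of each $\Gamma_t$ then follows automatically as a subdrawing.

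To do this, for each $x$-bucket $B^x_k$ I would list its vertices along the path $G$ as $v_{a_1}, v_{a_2},\ldots,v_{a_s}$ with $a_1<\cdots<a_s$ and $s\leq 2W$, and set $x(v_{a_j}) := j-1$. Symmetrically, for each $y$-bucket $B^y_l$ listed along $G$ as $v_{b_1},\ldots,v_{b_r}$, I would set $y(v_{b_j}):=j-1$. Both coordinates are well-defined, lie in $\{0,1,\ldots,2W-1\}$, and can be produced by a single linear scan of the path in $O(n)$ time. The main obstacle is then to verify that the drawing of $G[B^x_k]$ induced by these coordinates is planar. The key observation is that, since $G$ is a path, two consecutive vertices $v_{a_j}, v_{a_{j+1}}$ of the $B^x_k$-list can be joined by an edge in $G[B^x_k]$ only when $a_{j+1}=a_j+1$, and any such edge is then drawn as a segment of $x$-extent exactly $[j-1,j]$. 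Two distinct edges of $G[B^x_k]$ therefore have $x$-ranges that are either disjoint or meet at a single integer coordinate corresponding to a shared endpoint, so their interiors are disjoint. The symmetric argument applied to $y$-coordinates yields planarity of $G[B^y_l]$.

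To finish, I would fix $t$, set $i=\lceil t/W\rceil$, and apply the bucketing reduction: $V(G_t)$ is contained in some $B^x_k$ (if $i$ is odd) or some $B^y_l$ (if $i$ is even), so $\Gamma_t$ is a subdrawing of the planar drawing of $G[B^x_k]$ or $G[B^y_l]$, hence planar. Moreover, all vertices of $V(G_t)$ share either an $x$-bucket or a $y$-bucket, so they have pairwise distinct coordinates in that dimension and are therefore mapped to distinct grid points. This yields the desired planar straight-line drawing story on a $2W\times 2W$ grid, computed in $O(n)$ time.
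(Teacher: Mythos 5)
Your proposal is correct and follows essentially the same route as the paper: the same $x$-bucket/$y$-bucket decomposition, coordinates given by the rank of each vertex within its $x$- and $y$-bucket along the path, planarity of each $\Gamma[B^x_k]$ and $\Gamma[B^y_l]$ via monotonicity of the drawing in the corresponding coordinate, and the reduction of each $\Gamma_t$ to a subdrawing of one of these via \cref{prop:supergraph}. The only differences are cosmetic (0-indexed coordinates and a more explicit spelling-out of why consecutive-rank edges cannot cross).
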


\begin{proof}
Let $G$ be the path $(v_1, v_2, \dots, v_n)$ and let $h=\lceil \frac{n}{W}\rceil$ be the number of buckets of $V(G)$. We now order the vertices in each $x$-bucket and in each $y$-bucket; this is done according to the ordering in which the vertices appear~in~the~path~$G$. Formally, for $i=1,2,\dots,\lceil \frac{h+1}{2}\rceil$, let $x_i: B^x_i \rightarrow \{1,\dots,|B^x_i|\}$ be a bijective function such that, for any $v_a,v_b \in B^x_i$, we have $x_i(v_a)<x_i(v_b)$ if and only if $a<b$.
Similarly, for $i=1,2,\dots,\lceil \frac{h}{2}\rceil$, let $y_i: B^y_i \rightarrow \{1,\dots,|B^y_i|\}$ be a bijective function such that, for any $v_a,v_b \in B^y_i$, we have $y_i(v_a)<y_i(v_b)$ if and only if $a<b$.
We assign the coordinates to the vertices of~$G$ in $\Gamma$ as follows. 
For any vertex $v$ of~$G$, let $B^x_i$ and $B^y_j$ be the $x$-bucket and the $y$-bucket containing $v$, respectively. We place $v$ at the point $(x_i(v),y_j(v))$ in all the drawings $\Gamma_t \in \Gamma$ such that $v$ belongs to~$G_t$. Also, we draw each edge as a straight-line segment. This concludes the construction~of~$\Gamma$.

We now prove that $\Gamma$ satisfies the properties in the statement.

Since $x$-buckets and $y$-buckets have size at most~$2W$, we have that each vertex is assigned integer $x$- and $y$-coordinates in the interval $[1,2W]$. Thus, each drawing $\Gamma_t 	\in \Gamma$ lies on the $[1,2W] \times [1,2W]$ grid.

We show that each drawing $\Gamma_t \in \Gamma$ is planar. 
\begin{property}\label{cl:monotonicity}
For $i=1,\dots, \lceil \frac{h+1}{2}\rceil$, the straight-line drawing $\Gamma[B^x_i]$ of $G[B^x_i]$ is planar.
For  $j=1,\dots,\lceil \frac{h}{2}\rceil$, \mbox{the straight-line drawing $\Gamma[B^y_j]$ of $G[B^y_j]$ is planar.}
\end{property}

\begin{proof}
Consider any $x$-bucket $B^x_i$. By construction, the drawing $\Gamma[B^x_i]$ of $G[B^x_i]$ is \emph{$x$-monotone}, that is, the left-to-right order in $\Gamma[B^x_i]$ of the vertices in $B^x_i$  coincides with the order of such vertices in $G$. Hence, $\Gamma[B^x_i]$ is planar. Analogously, the drawing $\Gamma[B^y_j]$ of $G[B^y_j]$ is $y$-monotone, hence it is planar.
\end{proof}


We show that, for $t=1,\dots,n+W-1$, the straight-line drawing $\Gamma_t \in \Gamma$ of the graph $G_t$ is planar. By \cref{prop:supergraph}, there exists an $x$-bucket $B^x_i$ or a $y$-bucket $B^y_j$ that contains $V(G_t)$. The drawing $\Gamma_t$ coincides with the drawing $\Gamma[B^x_i]$ or with the drawing $\Gamma[B^y_j]$, respectively, restricted to the vertices in $V(G_t)$. By \cref{cl:monotonicity}, we have that $\Gamma[B^x_i]$ and $\Gamma[B^y_j]$ are planar, and hence so is $\Gamma_t$.

Finally, the time needed to compute $\Gamma$ coincides with the time needed to compute the functions $x_i$  and $y_j$, for each $i \in \{1,2,\dots, \lceil \frac{h+1}2\rceil\}$ and $j \in \{1,2,\dots, \lceil \frac{h}2\rceil \}$. 
This can be done in total $O(n)$ time as follows. For $i=1,\dots,n$, traverse the path $G$ from $v_1$ to $v_n$. When a vertex $v_k$ is considered, the buckets $B^x_i$ and $B^y_j$ where $v_k$ should be inserted are determined in $O(1)$ time; then $v_k$ is inserted in each of these buckets as the currently last vertex. This process provides each $x$-bucket $B^x_i$ and each $y$-bucket $B^y_j$ with the desired orderings $x_i$ and $y_j$, respectively.
\end{proof}


\newcommand{\sewedge}{\raisebox{-2pt}{\includegraphics[page=1,scale=0.45]{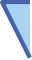}}\hspace{1pt}}
\newcommand{\swwedge}{\raisebox{-2pt}{\includegraphics[page=2,scale=0.45]{wedges.pdf}}\hspace{1pt}}
\newcommand{\newedge}{\raisebox{-2pt}{\includegraphics[page=3,scale=0.45]{wedges.pdf}}\hspace{1pt}}
\newcommand{\nwwedge}{\raisebox{-2pt}{\includegraphics[page=4,scale=0.45]{wedges.pdf}}\hspace{1pt}}
\newcommand{\sedrawing}{\raisebox{-1pt}{\includegraphics[page=5,scale=0.45]{wedges.pdf}}\hspace{1pt}}
\newcommand{\swdrawing}{\raisebox{-1pt}{\includegraphics[page=6,scale=0.45]{wedges.pdf}}\hspace{1pt}}
\newcommand{\nwdrawing}{\raisebox{-1pt}{\includegraphics[page=7,scale=0.45]{wedges.pdf}}\hspace{1pt}}
\newcommand{\nedrawing}{\raisebox{-1pt}{\includegraphics[page=8,scale=0.45]{wedges.pdf}}\hspace{1pt}}
\newcommand{\oursearrow}{\raisebox{-2pt}{\includegraphics[page=9,scale=0.35]{wedges.pdf}}\hspace{1pt}}
\newcommand{\ourswarrow}{\raisebox{-2pt}{\includegraphics[page=10,scale=0.35]{wedges.pdf}}\hspace{1pt}}
\newcommand{\ournearrow}{\raisebox{-2pt}{\includegraphics[page=11,scale=0.35]{wedges.pdf}}\hspace{1pt}}
\newcommand{\ournwarrow}{\raisebox{-2pt}{\includegraphics[page=12,scale=0.35]{wedges.pdf}}\hspace{1pt}}

\section{Tree Stories}\label{se:trees}

In this section we show how to draw a tree story $\langle T, \tau, W \rangle$. Our algorithm partitions $V(T)$ into buckets $B_1,\dots,B_h$ and then partitions the subtrees of $T$ induced by each bucket $B_i$ into two rooted ordered forests. For odd values of $i$, the forests corresponding to $B_i$ are drawn ``close'' to the $y$-axis, while for even values of $i$, the forests corresponding to $B_i$ are drawn ``close'' to the $x$-axis. The drawings of these forests need to satisfy strong visibility properties, as edges of $T$ might connect vertices in a bucket $B_i$ with the roots of the forests corresponding to the bucket $B_{i+1}$, and vice versa. We now introduce a drawing standard for (static) rooted ordered forests that guarantees these visibility properties.

For a vertex $v$ in a drawing $\Gamma$, denote by $\ell_{\oursearrow}(v)$ the half-line originating at $v$ with slope $-2$. Also, consider a horizontal half-line originating at $v$ and directed rightward; rotate such a line in clockwise direction around $v$ until it overlaps with~$\ell_{\oursearrow}(v)$; this rotation spans a closed wedge centered at $v$, which we call the \emph{$\sewedge$-wedge} of $v$ and denote by~$S_{\sewedge}(v)$. We have the following definition.


\begin{definition} \label{def:quadrant-drawings}
Let $\mathcal F =T_1,T_2,\dots,T_k$ be a rooted ordered forest, with a total of $m \leq W$ vertices. A~$\sedrawing$-\emph{drawing}~$\Gamma$ of $\mathcal F$ is a planar straight-line strictly-upward strictly-leftward order-preserving grid drawing of~$\mathcal F$ such that:
\begin{enumerate}[(i)]
\item \label{prop:grid} $\Gamma$ lies on the $[0,m-1] \times [4W-2m+2,4W]$ grid;
\item \label{prop:bottom-to-top-trees} the roots $r(T_1),r(T_2),\dots,r(T_k)$  lie along the segment $\big[(0,2W+2),(0,4W)\big]$, in this order from bottom to top, and $r(T_k)$ lies on $(0,4W)$;
\item \label{prop:bottom-to-top-sub-trees} the vertices of $T_i$ have $y$-coordinates strictly smaller than the vertices of $T_{i+1}$, for $i=1,\dots,k-1$;
\item \label{prop:bottom-to-top-internal} for each tree $T_i$ and for each vertex $v$ of $T_i$, let $u_1,u_2,\dots,u_{\ell}$ be the children of $v$ in left-to-right order; then the vertices of $T_i(u_j)$ have $y$-coordinates strictly smaller than the vertices of $T_i(u_{j+1})$, for $j=1,\dots,\ell-1$; and
 \item \label{prop:wedge-free-internally} for each vertex $v$ of $\mathcal F$, the wedge $S_{\sewedge}(v)$ does not intersect $\Gamma$ other than along $\ell_{\oursearrow}(v)$. 
\end{enumerate}
\end{definition}

We are going to use the following properties.

\begin{property}\label{prop:drawing-inside-triangle}
For $i=1,\dots,k$, the drawings of $T_1,T_2,\dots,T_i$ in a $\sedrawing$-drawing $\Gamma$ of $\mathcal F$ lie inside or on the boundary of the triangle delimited by the $y$-axis, by $\ell_{\oursearrow}(r(T_i))$, and by the horizontal line $y:=2W+2$.
\end{property}

\begin{proof}
That $\Gamma$ lies to the right or along the $y$-axis follows by \cref{prop:bottom-to-top-trees} of \cref{def:quadrant-drawings} and by the fact that  $\Gamma$ is strictly-leftward. That $\Gamma$ lies above or along the horizontal line $y:=2W+2$ follows by \cref{prop:grid} of \cref{def:quadrant-drawings} and by $m\leq W$. Finally, we prove that the drawings of $T_1,T_2,\dots,T_i$ in $\Gamma$ lie below or along $\ell_{\oursearrow}(r(T_i))$. That the drawing of $T_i$ in $\Gamma$ lies below or along $\ell_{\oursearrow}(r(T_i))$ follows by \cref{prop:wedge-free-internally} of \cref{def:quadrant-drawings} and by the fact that $\Gamma$ is strictly-upward. Further, for $j=1,\dots,i-1$, \cref{prop:bottom-to-top-trees} of \cref{def:quadrant-drawings} implies that $\ell_{\oursearrow}(r(T_j))$ lies below $\ell_{\oursearrow}(r(T_i))$; since the drawing of $T_j$ in $\Gamma$ lies below or along $\ell_{\oursearrow}(r(T_j))$, it follows that the drawing of $T_j$ in $\Gamma$ lies below $\ell_{\oursearrow}(r(T_i))$.
\end{proof}

\begin{property}\label{prop:containment}
For each vertex $v$ in a $\sedrawing$-drawing $\Gamma$ of $\mathcal F$, the wedge $S_{\sewedge}(v)$ contains the segment $\big[(2W+2,0),(4W,0)\big]$ in its interior.
\end{property}

\begin{proof}
The proof consists of two parts. 

First, the wedge $S_{\sewedge}(r(T_k))$ contains the segment $\big[(2W+2,0),(4W,0)\big]$ in its interior. Namely, by~\cref{prop:bottom-to-top-trees} of \cref{def:quadrant-drawings}, the vertex $r(T_k)$ is placed at $(0,4W)$, hence the half-line $\ell_{\oursearrow}(r(T_k))$ intersects the $x$-axis at the point $(2W,0)$.

Second, by \cref{prop:drawing-inside-triangle}, every vertex $v$ of $\mathcal F$ lies inside or on the boundary of the triangle delimited by the $y$-axis, by $\ell_{\oursearrow}(r(T_k))$, and by the horizontal line $y:=2W+2$. Hence $\ell_{\oursearrow}(v)$ either overlaps or is below $\ell_{\oursearrow}(r(T_k))$, which implies that $\ell_{\oursearrow}(v)$ intersects the $x$-axis at a point $(r,0)$ with $r\leq 2W$.
\end{proof}

We can similarly define $\swdrawing$-, $\nwdrawing$-, and $\nedrawing$-\emph{drawings}; in particular, a drawing of $\mathcal F$ is a $\swdrawing$-, $\nwdrawing$-, or $\nedrawing$-drawing if and only if it can be obtained from a $\sedrawing$-drawing by a clockwise rotation around the origin of the Cartesian axes by $90^\circ$, $180^\circ$, or $270^\circ$, respectively. A property similar to \cref{prop:containment} can be stated for such drawings.

We now present an algorithm, called $\sedrawing$-{\sc Drawer}, that constructs a $\sedrawing$-drawing $\Gamma$ of $\mathcal F =T_1,\dots,T_k$. The algorithm $\sedrawing$-{\sc Drawer} constructs $\Gamma$ by induction, primarily, on $k$ and, secondarily, on the number of vertices $m$ of $\mathcal F$. We only describe how to place the vertices of $\mathcal F$. Since $\Gamma$ is a straight-line drawing, this also determines the representation of the edges~of~$\mathcal F$.

\begin{figure}[t!]
\centering
\includegraphics[page=2,width=0.7\textwidth]{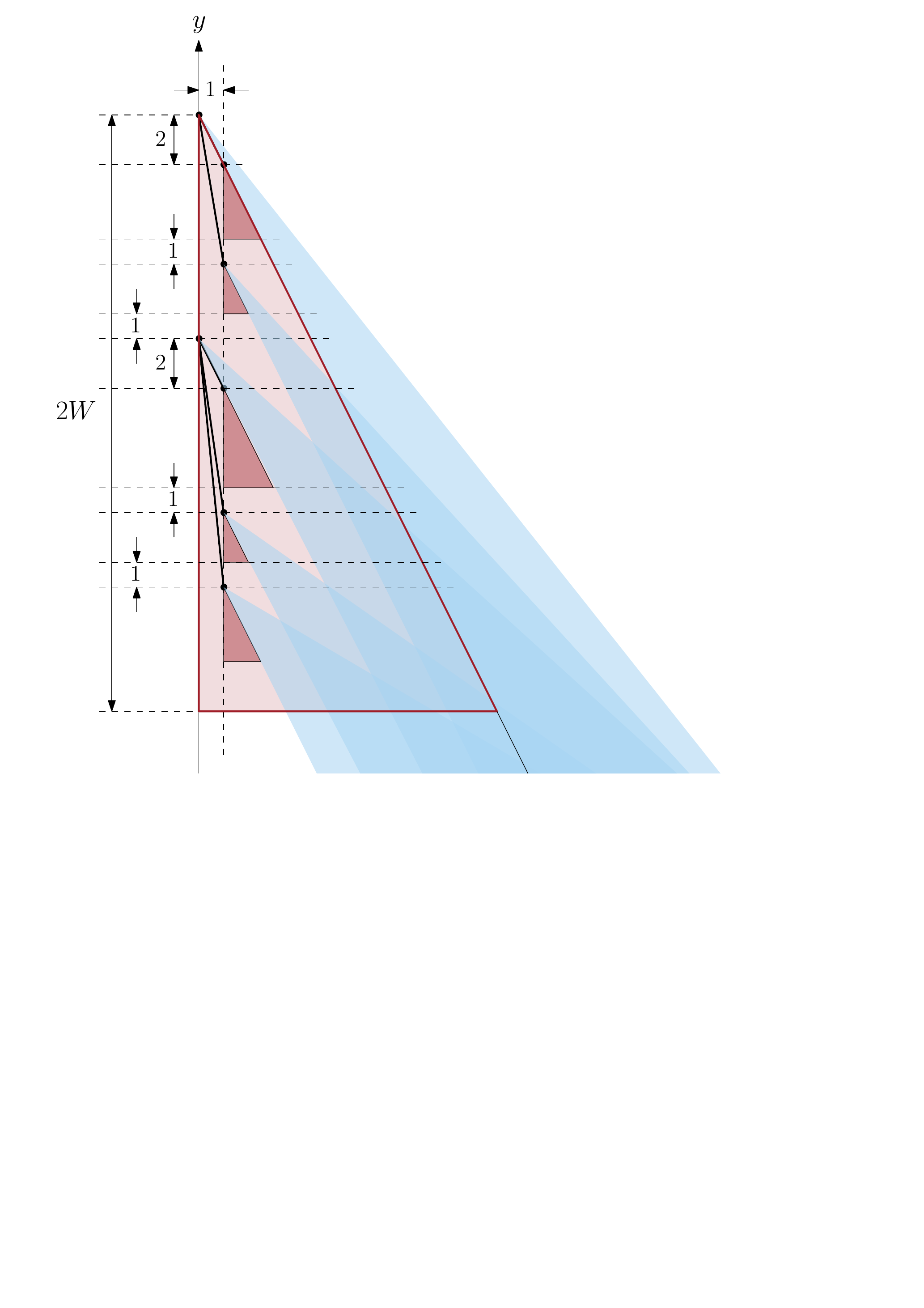}
\caption{Construction of a $\protect\sedrawing$-drawing: the first inductive case, in which $k=1$ and $m>1$.}
\label{fi:induction-first}
\end{figure}	

\begin{figure}[t!]
\centering
\includegraphics[page=3,width=0.7\textwidth]{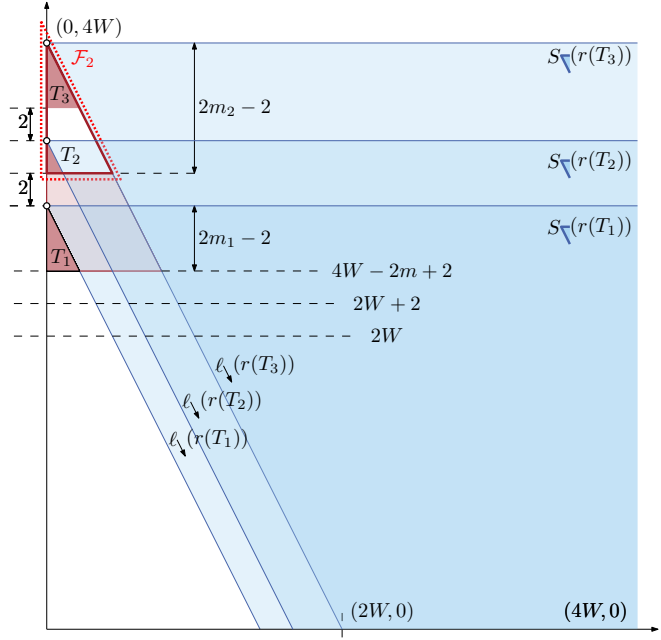}
\caption{Construction of a $\protect\sedrawing$-drawing: the second inductive case, in which $k>1$.}
\label{fi:induction-second}
\end{figure}

The base case of the algorithm $\sedrawing$-{\sc Drawer} happens when $m=1$ (and thus $k=1$); then we obtain $\Gamma$ by placing $r(T_1)$ at $(0,4W)$.

In the first inductive case, we have $k=1$ and $m>1$; see \cref{fi:induction-first}. Let $\mathcal F_1$ be the  rooted ordered forest $T_1(u_1),T_1(u_2),\dots,T_1(u_{\ell_1})$, where $u_1,u_2,\dots,u_{\ell_1}$ are the children of $r(T_1)$ in left-to-right order. Inductively construct a  $\sedrawing$-drawing $\Gamma_1$ of $\mathcal F_1$. 
We obtain $\Gamma$ by placing $r(T_1)$ at $(0,4W)$ and by translating $\Gamma_1$ one unit to the right and two units down.

In the second inductive case, we have $k>1$; see \cref{fi:induction-second}. We inductively construct a $\sedrawing$-drawing $\Gamma_1$ of $T_1$ and a $\sedrawing$-drawing $\Gamma_2$ of the rooted ordered forest $\mathcal F_2 = T_2,T_3,\dots,T_{k}$. Then, we obtain $\Gamma$ from $\Gamma_1$ and $\Gamma_2$ by translating $\Gamma_1$ down so that $r(T_1)$ lies two units below the lowest vertex in $\Gamma_2$.
We now prove the following.

\begin{lemma}\label{le:quadrant-drawings}
The algorithm $\sedrawing$-{\sc Drawer} constructs a $\sedrawing$-drawing of $\mathcal F$~in~$O(m)$~time.
\end{lemma}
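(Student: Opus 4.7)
The plan is to prove the lemma by induction on the total number of vertices $m$, verifying conditions (i)--(v) of \cref{def:quadrant-drawings} together with planarity, strict monotonicity in $x$ and $y$, and order-preservation at every step. The base case $m=1$ is immediate: the unique vertex is placed at $(0,4W)$ and every condition holds vacuously. In the first inductive case ($k=1$, $m>1$) I would first call the algorithm recursively on the forest $\mathcal F_1 = T_1(u_1),\dots,T_1(u_{\ell_1})$ to obtain a $\sedrawing$-drawing $\Gamma_1$ on $m-1$ vertices, then place $r(T_1)$ at $(0,4W)$ and translate $\Gamma_1$ by $(1,-2)$. Conditions (i)--(iv) are routine consequences of the corresponding conditions for $\Gamma_1$: the translated bounding box fits in $[0,m-1]\times[4W-2m+2,4W]$, and the roots $u_1,\dots,u_{\ell_1}$ end up on $x=1$ in bottom-to-top order, so the fan of edges from $r(T_1)$ to them is strictly-upward, strictly-leftward, and order-preserving; planarity follows from the fact that every non-root vertex of $\Gamma_1$ has $x$-coordinate at least $1$ before the shift, so the new edges live in the strip $0\leq x\leq 1$ and meet the shifted $\Gamma_1$ only at $u_1,\dots,u_{\ell_1}$.

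The delicate point in the first case is condition (v) at $r(T_1)$. I would apply \cref{prop:drawing-inside-triangle} to $\Gamma_1$ to conclude that every vertex of $\Gamma_1$ satisfies $y+2x\leq 4W$, since it lies on or below $\ell_{\oursearrow}(r(T_1(u_{\ell_1})))$. Because the $(1,-2)$ translation preserves $y+2x$, the inequality $y+2x\leq 4W$ continues to hold in $\Gamma$, so every vertex sits on or below $\ell_{\oursearrow}(r(T_1))$; since straight-line edges between such vertices also remain below that line, $S_{\sewedge}(r(T_1))$ does not intersect $\Gamma$ other than along $\ell_{\oursearrow}(r(T_1))$. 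Condition (v) at every other vertex $v'$ of $\Gamma$ follows from its validity in $\Gamma_1$ together with the observation that $r(T_1)$, sitting strictly to the upper-left of every other vertex, is outside $S_{\sewedge}(v')$.

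In the second inductive case ($k>1$) I would apply the induction hypothesis separately to $T_1$ and to $\mathcal F_2 = T_2,\dots,T_k$, obtaining $\Gamma_1$ and $\Gamma_2$, let $y_{\min}$ be the smallest $y$-coordinate in $\Gamma_2$, and translate $\Gamma_1$ so that $r(T_1)$ moves to $(0,y_{\min}-2)$. Condition (ii) reduces to $y_{\min}-2\geq 2W+2$, which I would verify from the inductive bound $y_{\min}\geq 4W-2m_2+2$ combined with $m_2\leq W-1$ (which follows from $m\leq W$ and $m_1\geq 1$). Conditions (i), (iii), and (iv) follow from the strict vertical separation between $\Gamma_1$ and $\Gamma_2$ and from the inductive properties of these drawings, and planarity is inherited (no edge of $\mathcal F$ connects a vertex of $T_1$ to a vertex of $\mathcal F_2$).

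The main obstacle is condition (v) in the second case. For any $v\in\Gamma_2$ I would need to show that no vertex or edge of the shifted $\Gamma_1$ enters $S_{\sewedge}(v)$, while for any $v'$ in the shifted $\Gamma_1$ the analogous condition is immediate because $S_{\sewedge}(v')$ extends upward only to $y=v'_y\leq y_{\min}-2$, below $\Gamma_2$. For the hard direction I would again track $y+2x$: by \cref{prop:drawing-inside-triangle} for $\Gamma_1$, every vertex $v'$ of $\Gamma_1$ satisfies $v'_y+2v'_x\leq 4W$ before the shift, so after translating $\Gamma_1$ downward by $4W-y_{\min}+2$ every shifted vertex satisfies $v'_y+2v'_x\leq y_{\min}-2$; on the other hand every $v\in\Gamma_2$ satisfies $v_y+2v_x\geq y_{\min}$ (since $v_y\geq y_{\min}$ and $v_x\geq 0$). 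The strict gap $y_{\min}-2<y_{\min}$ puts $v'$ strictly below $\ell_{\oursearrow}(v)$ and hence outside $S_{\sewedge}(v)$, and linearity of $y+2x$ along each straight-line edge extends the conclusion from vertices to edges. The $O(m)$ running time finally follows from the recurrences $T(m)=T(m-1)+O(1)$ and $T(m)=T(m_1)+T(m-m_1)+O(1)$, provided $y_{\min}$ is maintained incrementally during the recursion.
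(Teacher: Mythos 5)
Your proof is correct and follows essentially the same route as the paper's: the same three-case induction (base case, single tree with $m>1$, and $k>1$), the same placements and translations, and the same reliance on \cref{prop:drawing-inside-triangle} to establish \cref{prop:wedge-free-internally} of \cref{def:quadrant-drawings}. The only cosmetic difference is that you verify the wedge condition uniformly via the invariant $y+2x\leq 4W$ (where the paper, in the first inductive case, instead identifies the lower part of $S_{\sewedge}(r(T_1))$ with $S_{\sewedge}(r(T_1(u_{\ell_1})))$ and applies induction), which is an equivalent argument.
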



\begin{proof}
The algorithm $\sedrawing$-{\sc Drawer} clearly runs in $O(m)$ time. We now prove that the drawing $\Gamma$ of $\mathcal F$ it constructs is a $\sedrawing$-drawing. This is trivial in the base case.



In the first inductive case, we have $k=1$ and $m>1$. Recall that $\Gamma$ has been obtained by placing $r(T_1)$ at $(0,4W)$ and by translating the inductively constructed $\sedrawing$-drawing $\Gamma_1$ of $\mathcal F_1=T_1(u_1),T_1(u_2),\dots,T_1(u_{\ell_1})$ one unit to the right and two units below. Let $\Gamma'_1$ be the drawing of $\mathcal F_1$ in $\Gamma$.

First, we prove that $\Gamma$ is planar. By construction, $\Gamma'_1$ is a translation of $\Gamma_1$, which is planar by induction. Further, by construction, the edges incident to $r(T_1)$ lie to the left of the vertical line $x:=1$, except at the vertices $u_1,u_2,\dots,u_{\ell_1}$; also, $\Gamma'_1$ lies to the right of the vertical line $x:=1$, except at the vertices $u_1,u_2,\dots,u_{\ell_1}$. Hence, no edge incident to $r(T_1)$ intersects any edge of $\mathcal F_1$ in $\Gamma$ other than at a common end-point. Finally, no two edges incident to $r(T_1)$ overlap, since their end-points different from $r(T_1)$ lie at distinct points on the vertical line $x:=1$, by \cref{prop:bottom-to-top-trees} of \cref{def:quadrant-drawings} for $\Gamma_1$.

Second, we prove that $\Gamma$ is strictly-upward, strictly-leftward, and order-preserving. By construction, $\Gamma'_1$ is a translation of $\Gamma_1$, which is strictly-upward, strictly-leftward, and order-preserving, by induction. Further, by construction, $r(T)$ lies above and to the left of $\Gamma'_1$, and hence it lies above and to the left of its children $u_1,u_2,\dots,u_{\ell_1}$; it follows that $\Gamma$ is strictly-upward and strictly-leftward. Also, since $u_1,u_2,\dots,u_{\ell_1}$ appear in this bottom-to-top order along the vertical line $x:=1$, by \cref{prop:bottom-to-top-trees} of \cref{def:quadrant-drawings} for $\Gamma_1$, it follows that $\Gamma$ is order-preserving.

Third, that $\Gamma$ is a grid drawing easily follows from the fact that $\Gamma'_1$ is a translation by an integer vector of $\Gamma_1$, which is a grid drawing by induction, and from the fact that $r(T_1)$ is placed at $(0,4W)$, by construction.

We now prove that $\Gamma$ satisfies~\cref{prop:grid} of \cref{def:quadrant-drawings}. Note that $\mathcal F_1$ has $m-1$ vertices. By induction, $\Gamma_1$ lies on the $[0,(m-1)-1]\times [4W-2(m-1)+2,4W]$ grid. Hence, $\Gamma'_1$ lies on the $[1,m-1]\times [4W-2m+2,4W-2]$ grid. Since $r(T_1)$ is placed at $(0,4W)$, we get that $\Gamma$ lies on the $[0,m-1]\times [4W-2m+2,4W]$~grid.

\cref{prop:bottom-to-top-trees} of \cref{def:quadrant-drawings} is satisfied by $\Gamma$ given that $k=1$ and that $r(T_1)$ is placed at $(0,4W)$. 

\cref{prop:bottom-to-top-sub-trees} of \cref{def:quadrant-drawings} is trivially true given that $\mathcal F$ consists of one tree only.

Next, we prove that $\Gamma$ satisfies~\cref{prop:bottom-to-top-internal} of \cref{def:quadrant-drawings}. Every vertex $v$ of $\mathcal F_1$ satisfies the requirements of the condition, given that $\Gamma'_1$ is a translation of $\Gamma_1$ and given that $\Gamma_1$ satisfies~\cref{prop:bottom-to-top-internal} of \cref{def:quadrant-drawings}. Further, $r(T_1)$ also satisfies the requirements of the condition; namely, the vertices of $T_1(u_j)$ have $y$-coordinates strictly smaller than the vertices of $T_1(u_{j+1})$, for $j=1,\dots,\ell_1-1$, given that $\Gamma'_1$ is a translation of $\Gamma_1$ and given that $\Gamma_1$ satisfies \cref{prop:bottom-to-top-sub-trees} of \cref{def:quadrant-drawings}. 

Finally, concerning~\cref{prop:wedge-free-internally} of \cref{def:quadrant-drawings}, we have that every vertex $v$ of $\mathcal F_1$ satisfies the requirements of the condition, given that $\Gamma'_1$ is a translation of $\Gamma_1$ and given that $\Gamma_1$ satisfies~\cref{prop:wedge-free-internally} of \cref{def:quadrant-drawings}. It remains to prove that $S_{\sewedge}(r(T_1))$ does not intersect $\Gamma$ other than along $\ell_{\oursearrow}(r(T_1))$. The part of $S_{\sewedge}(r(T_1))$ below or on the line $y:=4W-2$ coincides with $S_{\sewedge}\big(r(T(u_{\ell_1}))\big)$, given that $r(T(u_{\ell_1}))$ lies at the intersection between $\ell_{\oursearrow}(r(T_1))$ and $y:=4W-2$; hence, this part of $S_{\sewedge}(r(T_1))$ does not intersect $\Gamma$ other than along $\ell_{\oursearrow}(r(T_1))$, by induction. The part of $S_{\sewedge}(r(T_1))$ above the line $y:=4W-2$ contains only $r(T_1)$, by construction.


In the second inductive case we have $k>1$. Recall that $\Gamma$ has been obtained from the 
inductively constructed $\sedrawing$-drawings $\Gamma_1$ of $T_1$ and $\Gamma_2$ of $\mathcal F_2 = T_2,T_3,\dots,T_{k}$, by translating $\Gamma_1$ vertically down in such a way that $r(T_1)$ lies two units below the lowest vertex in $\Gamma_2$. Let $\Gamma'_1$ be the drawing of $T_1$ in $\Gamma$.

First, the planarity of $\Gamma$ follows from the planarity of $\Gamma_1$ and $\Gamma_2$, from the fact that $\Gamma'_1$ is a translation of $\Gamma_1$, and from the fact $\Gamma'_1$ and $\Gamma_2$ are separated by a horizontal line.

Second, $\Gamma$ is strictly-upward, strictly-leftward, and order-preserving since so are $\Gamma_1$ and $\Gamma_2$, and since $\Gamma'_1$ is a translation of $\Gamma_1$.

Third, $\Gamma$ is a grid drawing given that $\Gamma_1$ and $\Gamma_2$ are grid drawings and given that $\Gamma'_1$ is a translation by an integer vector of $\Gamma_1$.

We now prove that $\Gamma$ satisfies~\cref{prop:grid} of \cref{def:quadrant-drawings}. Let $m_1$ and $m_2$ be the number of vertices of $T_1$ and $\mathcal F_2$, respectively. By induction, $\Gamma_1$ lies on the $[0,m_1-1]\times [4W-2m_1+2,4W]$ grid and $\Gamma_2$ lies on the $[0,m_2-1]\times [4W-2m_2+2,4W]$ grid. Hence, $\Gamma'_1$ lies on the $[0,m_1-1]\times [4W-2m_1-2m_2+2,4W-2m_2]$ grid. Thus, $\Gamma$ lies on the $[0,\max\{m_1,m_2\}-1]\times  [4W-2m_1-2m_2+2,4W]$ grid. Since $\max\{m_1,m_2\}<m$ and $m_1+m_2=m$, we have that $\Gamma$ lies on the $[0,m-1]\times [4W-2m+2,4W]$ grid.

\cref{prop:bottom-to-top-trees,prop:bottom-to-top-sub-trees} of \cref{def:quadrant-drawings} are satisfied by $\Gamma$ since they are satisfied by $\Gamma_1$ and $\Gamma_2$, and since, by construction, the drawing $\Gamma_1$ is translated vertically down in such a way that $r(T_1)$ (and thus every vertex of $T_1$) lies below every vertex of $T_2$. 

\cref{prop:bottom-to-top-internal} of \cref{def:quadrant-drawings} is satisfied by $\Gamma$ since it is satisfied by each of $\Gamma_1$ and $\Gamma_2$ and since $\Gamma'_1$ is a translation of $\Gamma_1$.

Finally, concerning~\cref{prop:wedge-free-internally} of \cref{def:quadrant-drawings}, consider any vertex $v$ of $\mathcal F$. If $v$ is a vertex of $T_1$, then $S_{\sewedge}(v)$ does not intersect $\Gamma'_1$ other than along $\ell_{\oursearrow}(v)$, given that $\Gamma_1$ satisfies~\cref{prop:wedge-free-internally} of \cref{def:quadrant-drawings} and given that $\Gamma'_1$ is a translation of $\Gamma_1$; further,  $S_{\sewedge}(v)$ does not intersect $\Gamma_2$, since $\Gamma_2$ lies above the horizontal line through $r(T_1)$, while $S_{\sewedge}(v)$ lies below or on the same line. If $v$ is a vertex of $\mathcal F_2$, then $S_{\sewedge}(v)$ does not intersect $\Gamma_2$ other than along $\ell_{\oursearrow}(v)$, given that $\Gamma_2$ satisfies~\cref{prop:wedge-free-internally} of \cref{def:quadrant-drawings}; it remains to prove that $S_{\sewedge}(v)$ does not intersect $\Gamma'_1$. By \cref{prop:drawing-inside-triangle}, every vertex $v$ of $T_1$ lies inside or on the boundary of the triangle delimited by the $y$-axis, by $\ell_{\oursearrow}(r(T_1))$, and by the horizontal line $y:=2W+2$. Since $v$ lies above $r(T_1)$, we have that $\ell_{\oursearrow}(v)$ lies above $\ell_{\oursearrow}(r(T_1))$. This implies that $S_{\sewedge}(v)$ does not intersect $\Gamma'_1$. 
%
\end{proof}

Algorithms $\swdrawing$-{\sc Drawer}, $\nwdrawing$-{\sc Drawer}, and $\nedrawing$-{\sc Drawer} that construct a $\swdrawing$-drawing, a $\nwdrawing$-drawing, and a $\nedrawing$-drawing of $\mathcal F$ can be defined symmetrically to the algorithm $\sedrawing$-{\sc Drawer}.

We now go back to the problem of drawing a tree story $\langle T, \tau, W \rangle$. Let $n=|V(T)|$. Recall that $V(T)$ is partitioned into buckets $B_1,\dots,B_h$, where $h=\lceil \frac{n}{W} \rceil$. We now show how to partition the subtrees of $T$ induced by each bucket into up to two rooted ordered forests, so that the algorithms $\sedrawing$-, $\swdrawing$-, $\nwdrawing$-, and $\nedrawing$-{\sc Drawer} can be exploited in order to draw such forests, thus obtaining a drawing story of $\langle T, \tau, W \rangle$. We proceed in several phases as follows.


\paragraph{\underline{\sc Phase 1:}} We label each vertex $v$ of $T$ belonging to a bucket $B_i$ with the label $b(v)=i$ and we remove from $T$ all the edges $(u,v)$ such that $|b(u)-b(v)|>1$. Observe that such edges are not visualized in a drawing story of $\langle T, \tau, W \rangle$.

\paragraph{\underline{\sc Phase 2:}} As $T$ might have been turned into a forest by the previous edge removal, we add dummy edges to $T$ to turn it back into a tree, while ensuring that $|b(u)-b(v)|\leq 1$ for every edge $(u,v)$ of $T$. This is possible due to the following.

\begin{lemma} \label{le:edge-addition}
Dummy edges can be added to $T$ in total $O(n)$ time so that $T$ becomes a tree and every edge $(u,v)$ of $T$ is such that $|b(u)-b(v)|\leq 1$. 
\end{lemma}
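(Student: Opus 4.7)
The plan is to reconnect the forest $F$ obtained from $T$ after Phase 1 into a tree by adding a few dummy edges, each respecting the bucket condition $|b(u)-b(v)|\leq 1$. The key structural observation I would establish first is that, for every connected component $C$ of $F$, the set $\{b(v) : v \in C\}$ is a contiguous integer interval: along any path in $C$ between two vertices with bucket labels $a$ and $b$, consecutive bucket labels differ by at most one, so every integer in $[\min(a,b),\max(a,b)]$ appears on the path. This suggests that adding only bucket-local connections suffices to merge all components.

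Based on this, I would construct an auxiliary graph $T'$ from $F$ by adding two families of dummy edges, all of which trivially satisfy the bucket condition. Using the fact that every bucket $B_i$ is non-empty (the first $h-1$ buckets have exactly $W$ vertices and $B_h$ has at least one), fix an arbitrary representative $r_i \in B_i$ for each $i = 1,\dots,h$; then add the edge $(r_i,r_{i+1})$ for each $i \in \{1,\dots,h-1\}$ and the edge $(v,r_i)$ for each $i \in \{1,\dots,h\}$ and each $v \in B_i \setminus \{r_i\}$. The graph $T'$ is then connected: every vertex $v \in B_i$ is directly joined to $r_i$, while the representatives form a path $r_1,r_2,\dots,r_h$ through the edges of the first family.

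The tree to be returned is any spanning tree $T''$ of $T'$ that contains all edges of $F$. Since $F$ is acyclic and $T'$ is a connected supergraph of $F$, such a $T''$ exists and can be produced by a BFS or DFS on $T'$ that traverses $F$-edges before dummy edges and uses a dummy edge only when it reaches a previously unexplored vertex; the retained dummy edges are the ones added to $T$. For the running time, computing $F$ takes $O(n)$ via DFS; $T'$ has $|E(F)| + (h-1) + \sum_{i=1}^{h}(|B_i|-1) = O(n)$ edges; and the spanning tree $T''$ is extracted in $O(n)$ time. The only point requiring mild care is to guarantee that every original edge of $F$ survives in the spanning-tree extraction, which is standard given the acyclicity of $F$, so I do not foresee any more substantial obstacle.
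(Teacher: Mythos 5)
Your overall route is different from the paper's and is essentially sound: you superimpose on the post--Phase-1 forest $F$ a global scaffold of bucket-respecting candidate edges (a path $r_1,\dots,r_h$ through one representative per bucket, plus a star from each $r_i$ to the rest of $B_i$), observe that the resulting graph $T'$ is connected, and then extract a spanning tree of $T'$ that contains all of $F$. The paper instead never builds a candidate pool: it grows a single component $T^*$ incrementally, maintaining a representative of $T^*$ for each bucket it currently spans, repeatedly absorbing a component that contains a vertex of the bucket just outside that span via one new edge, and finally attaching each leftover component inside its own bucket. Your reduction to ``extend a spanning forest to a spanning tree of a connected supergraph'' is conceptually cleaner; the paper's construction is more hands-on but needs no such subroutine. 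Both add exactly one dummy edge per missing adjacency and both run in $O(n)$ time.

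There is, however, one concrete step that fails as literally described: the claim that a DFS/BFS on $T'$ which ``traverses $F$-edges before dummy edges'' yields a spanning tree containing every edge of $F$. Counterexample: let a component $C$ of $F$ consist of $x$, $y_1$, $y_2$ with $F$-edges $(x,y_1)$ and $(x,y_2)$, all in bucket $B_i$, and let $r_i\notin C$. Starting at $x$, the search takes the $F$-edge to $y_1$; at $y_1$ the only $F$-neighbor is the already-visited $x$, so it takes the dummy edge to $r_i$; from $r_i$ it takes the dummy edge to the still-unexplored $y_2$. The $F$-edge $(x,y_2)$ is now a non-tree edge, so an original edge of $T$ is dropped from the output --- which the lemma does not permit, since that edge must appear in the drawing story. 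Edge-priority at the vertex level does not prevent the search from re-entering a partially explored $F$-component through the scaffold. The repair is standard and preserves $O(n)$ time: contract the components of $F$ (computable by one traversal), take a spanning tree of the resulting multigraph of dummy edges by BFS, and return $F$ together with the dummy edges selected there; equivalently, seed a union--find structure with the components of $F$ and scan only the dummy edges. With that substitution your argument goes through.
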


\begin{proof}
First, we compute the connected components of $T$ and we label each vertex with the connected component it belongs to; this is done in $O(n)$ time. Now we arbitrarily pick a connected component $T^*$ of $T$. Note that the vertices of $T^*$ belong to a set of consecutive buckets $B_i,B_{i+1},\dots,B_j$. We visit $T^*$ in $O(|V(T^*)|)$ time to elect a {\em representative vertex} of $T^*$ for each of $B_i,B_{i+1},\dots,B_j$. Namely, when we visit a vertex $u$, we elect it as the representative vertex of $T^*$ for $B_{b(u)}$ if no representative vertex  of $T^*$ for $B_{b(u)}$ has been elected yet. 

Now the algorithm proceeds in steps, while maintaining the invariant that every edge $(u,v)$ of $T$ is such that $|b(u)-b(v)|\leq 1$. In a single step, we add an edge between a representative vertex of $T^*$ and a vertex in a  connected component of $T$ different from $T^*$. This is done as follows. Suppose that the vertices of $T^*$ belong to a set of consecutive buckets $B_p,B_{p+1},\dots,B_q$ such that $p>1$. Then we consider any vertex $v$ of $B_{p-1}$; let $T_v$ be the connected component of $T$ containing $v$ and note that $T_v\neq T^*$. We visit $T_v$ in $O(|V(T_v)|)$ time; when we visit a vertex $u$, we elect it as the representative vertex of $T^*$ for the bucket $B_{b(u)}$ if no representative vertex  of $T^*$ for $B_{b(u)}$ has been elected yet. Then we add to $T$ the edge between $u$ and the representative vertex $v$ of $T^*$ for $B_p$; note that $b(v)-b(u)= 1$. Now $T^*$ also contains $T_v$, as well as the edge $(u,v)$. A connected component $T_v$ can be analogously merged with $T^*$ in $O(|V(T_v)|)$ time if $q<h$. Eventually, $T^*$ contains a representative vertex for each bucket $B_1,B_2,\dots,B_h$. We consider the remaining connected components of $T$ different from $T^*$ one at a time. For each such a component, we select any vertex $u$, which we connect to the representative vertex $v$ of $T^*$ for $B_{b(u)}$ in $O(1)$ time; note that $b(v)-b(u)= 0$.\end{proof}

\paragraph{\underline{\sc Phase 3:}} We now root $T$ at an arbitrary vertex $r(T)$ in $B_1$. A \emph{pertinent component} of $T$ is a maximal connected component of $T$ composed of vertices in the same bucket. We assign a \emph{label} $b(P)=i$ to a pertinent component $P$ if every vertex of $P$ belongs to $B_i$. We now construct the following sets $R_1,R_2,\dots,R_k$ of pertinent components of $T$; see \cref{fi:treesExample}. 
The set $R_1$ only contains the pertinent component of $T$ the vertex $r(T)$ belongs to.
For $j>1$, the set $R_j$ contains every pertinent component~$P$ of $T$ such that
\begin{inparaenum}[(i)]
\item $P$ does not belong to $\bigcup^{j-1}_{i=1} R_i$ and
\item $P$ contains a vertex that is adjacent to a vertex belonging to a pertinent component in $R_{j-1}$.
\end{inparaenum}

By the construction of the $R_j$'s, since $|b(u)-b(v)|\leq 1$ for every edge $(u,v)$ of $T$, and by the rooting of $T$, we have the following simple property.

\begin{property} \label{prop:children}
	For every vertex $v \in R_j$, each child of $v$ belongs to either $R_j$~or~$R_{j+1}$.
\end{property}

%

\paragraph{\underline{\sc Phase 4:}} Next, we turn $T$ into an ordered tree by specifying a left-to-right order for the children of each vertex $v$. 
Let $R_j$ be the set $v$ belongs to. Then, by \cref{prop:children}, each child of $v$ is either in $R_j$ or in $R_{j+1}$. 
We set the left-to-right order of the children of $v$ so that all those in $R_j$ come first, in any order, and all those in $R_{j+1}$ come next, in any order.

\paragraph{\underline{\sc Phase 5:}} We are going to define rooted ordered forests as follows.
For $i=1,\dots,h$ with $i$ odd, we define:
\begin{enumerate}
	\item $\mathcal F_{i \cdot {\sedrawing}}$ as the forest containing all the pertinent components $P$ such that $b(P)=i$ and such that $P$ belongs to a set $R_j$ with $j \equiv 1 \mod 4$, and
	\item $\mathcal F_{i \cdot {\nwdrawing}}$ as the forest containing all the pertinent components $P$ such that $b(P)=i$ and such that $P$ belongs to a set $R_j$ with $j \equiv 3 \mod 4$.
\end{enumerate}
Also, for $i=1,\dots,h$ with $i$ even, we define:
\begin{enumerate}
	\item $\mathcal F_{i \cdot {\swdrawing}}$ as the forest containing all the pertinent components $P$ such that $b(P)=i$ and such that $P$ belongs to a set $R_j$ with $j \equiv 2 \mod 4$, and
	\item $\mathcal F_{i \cdot {\nedrawing}}$ as the forest containing all the pertinent components $P$ such that $b(P)=i$ and such that $P$ belongs to a set $R_j$ with $j \equiv 0 \mod 4$.
\end{enumerate}

We have the following. 

\begin{observation}\label{obs:edge-assignment}
Let $v$ be a vertex of $T$ and $u$ be its parent. Let $R_i$ and $R_j$ be the sets containing the pertinent components $u$ and $v$ belong to, respectively, where $j=i$ or $j = i+1$, by~\cref{prop:children}. Then the following cases are possible.
\begin{enumerate}[\bf \ref{obs:edge-assignment}a]
\item \label{obs:a} If $j=i$, then $u$ and $v$ both belong to either $\mathcal F_{i \cdot {\sedrawing}}$, $\mathcal F_{i \cdot {\nwdrawing}}$, $\mathcal F_{i \cdot {\swdrawing}}$, or $\mathcal F_{i \cdot {\nedrawing}}$.
\item \label{obs:b} If $j=i+1$, then $v$ is the root of a pertinent component in $R_j$. Also, either: 
\begin{enumerate}[(i)]
\item $i$ is odd, $u \in \mathcal F_{i \cdot {\sedrawing}}$, and $v \in \mathcal F_{i+1 \cdot {\swdrawing}}$; 
\item $i$ is even, $u \in \mathcal F_{i \cdot {\swdrawing}}$, and $v \in \mathcal F_{i+1 \cdot {\nwdrawing}}$; 
\item $i$ is odd, $u \in \mathcal F_{i \cdot {\nwdrawing}}$, and $v \in \mathcal F_{i+1 \cdot {\nedrawing}}$; or
\item $i$ is even, $u \in \mathcal F_{i \cdot {\nedrawing}}$, and $v \in \mathcal F_{i+1 \cdot {\sedrawing}}$.
\end{enumerate}
\end{enumerate}
\end{observation}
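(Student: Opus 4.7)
The plan is to prove the observation by unpacking the definitions in Phases~3 and~5 and exploiting a simple structural fact about the pertinent components. Contracting each pertinent component of $T$ to a single node yields a tree $\hat T$ (contracting any connected subgraph of a tree yields a tree), naturally rooted at the node corresponding to the pertinent component of $r(T)$. Phase~3 is then precisely a BFS on $\hat T$ from this root, so the $R$-index of a pertinent component equals one plus its depth in $\hat T$. Moreover, two adjacent pertinent components must have bucket labels differing by exactly one, because every edge of $T$ satisfies $|b(u)-b(v)|\le 1$ by Phase~2 while a difference of $0$ would violate the maximality in the definition of pertinent component. A short induction starting from $b(P_{r(T)})=1$ then shows that the parity of $b(P)$ always coincides with the parity of the $R$-index of $P$.

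For Case~a ($j=i$), I would argue that $P_u=P_v$. Otherwise, since $u$ is the parent of $v$ in $T$ and $u\notin P_v$, the vertex $v$ must be the topmost vertex of $P_v$ in $T$, which makes $P_v$ a child of $P_u$ in $\hat T$ and forces the $R$-index of $P_v$ to be $i+1$, contradicting $j=i$. From $P_u=P_v$ we get $b(u)=b(v)$, and then Phase~5 places both vertices in a single forest $\mathcal F_{b(u)\cdot X}$, where the type $X\in\{\sedrawing,\nwdrawing,\swdrawing,\nedrawing\}$ is determined by $i\bmod 4$ (the parity of $b(u)$ matches that of $i$ by the preliminary observation, so the parity constraint in the definition of the four forests is automatically satisfied).

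For Case~b ($j=i+1$), the same contraction argument yields $P_u\neq P_v$ and identifies $v$ as the root of $P_v$ in $T$, which is the first assertion of Case~b. To pin down the forest containing $v$, I would combine two facts already in hand: the parity of $b(v)$ is opposite that of $b(u)$ (adjacent pertinent components differ by exactly one in bucket label), and $v$'s $R$-index residue modulo~$4$ is one more than $u$'s. Pairing each of the four possibilities for $u$'s forest with the forest of $v$ determined by these two shifts yields exactly the four sub-cases (i)--(iv).

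The main obstacle is a notational subtlety rather than a mathematical one: the letter $i$ plays two distinct roles, as an $R$-index in the hypothesis $R_i$ and as a bucket label (equal to $b(u)$) in the subscript of $\mathcal F_{i\cdot X}$ inside the sub-cases; moreover, the subscript ``$i+1$'' attached to $v$'s forest should be read as $b(v)$, which by the structural observation equals $b(u)\pm 1$ rather than necessarily $b(u)+1$. With these identifications made explicit, each sub-case (i)--(iv) boils down to a routine residue check modulo~$4$.
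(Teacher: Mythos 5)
Your proof is correct. The paper states this as an unproved observation, and your argument --- the BFS levels of Phase~3 coincide with depths in the tree obtained by contracting pertinent components, adjacent distinct pertinent components differ by exactly one in bucket label (by maximality), hence bucket parity tracks the parity of the $R$-index --- is precisely the definition-unpacking the authors leave implicit. You are also right to flag that the subscript in $\mathcal F_{i+1\cdot X}$ for $v$'s forest must be read as $b(v)=b(u)\pm 1$ rather than literally as $b(u)+1$; the way the observation is invoked in the proof of \cref{cl:planarity} (where the parent may lie in the higher-indexed bucket) confirms that this is the intended reading.
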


For each pertinent component $P$ in any $\mathcal F_{i \cdot X}$, with $i\in \{1,\dots,h\}$ and $X \in \{\sedrawing,\swdrawing, \nwdrawing, \nedrawing\}$, let $R_j$ be the set $P$ belongs to. If $j=1$, then the root of $P$ is $r(T)$, otherwise the root of $P$ is the vertex of $P$ that is adjacent to a vertex in $R_{j-1}$. Further, the left-to-right order of the children of every vertex of $P$ is the one inherited from $T$.
Finally, the linear ordering of the pertinent components in $\mathcal F_{i \cdot X}$ is defined as follows.
Let $P_1$ and $P_2$ be any two pertinent components in $\mathcal F_{i \cdot X}$ and let $R_j$ and $R_k$ be the sets containing $P_1$ and $P_2$, respectively. If $j<k$, then $P_1$ precedes $P_2$ in $\mathcal F_{i \cdot X}$. If $j>k$, then $P_1$ follows $P_2$ in $\mathcal F_{i \cdot X}$. Otherwise $j=k$; let $x$ be the lowest common ancestor of the roots of $P_1$ and of $P_2$ in $T$. Also, let $p_1$ and $p_2$ be the paths connecting the roots of $P_1$ and of $P_2$ with $x$, respectively. Further, let $x_1$ and $x_2$ be the children of $x$ belonging to $p_1$ and to $p_2$, respectively. Then, $P_1$ precedes $P_2$ in $\mathcal F_{i \cdot X}$ if and only if $x_1$ precedes $x_2$ in the left-to-right order of the children of $x$. We have the following.

\begin{lemma} \label{le:construction-forests}
Given the sets $R_1,\dots,R_k$, the rooted ordered forests $\mathcal F_{i \cdot X}$, with $i=1,\dots,h$ and $X \in \{\sedrawing,\swdrawing, \nwdrawing, \nedrawing\}$, can be computed in total $O(n)$ time.
\end{lemma}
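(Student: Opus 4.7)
The plan is to extract all the information needed to describe the forests $\mathcal F_{i \cdot X}$ via a single pre-order DFS of $T$ that respects the left-to-right child ordering fixed by Phase 4, followed by a linear sweep over the sets $R_1, R_2, \dots, R_k$ that assembles each forest in the correct order. During the DFS I would identify the roots of pertinent components: a vertex $v$ is such a root precisely when $v=r(T)$ or $b(v)\neq b(p(v))$. For each pertinent component $P$ so discovered, the bucket $i=b(P)$ is already recorded (Phase 1) and the index $j$ of the set $R_j$ containing $P$ is already recorded (Phase 3); together with the parity of $i$ these determine in $O(1)$ time the target forest $\mathcal F_{i \cdot X}$. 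The rooted-ordered-tree structure inside $P$ is inherited unchanged from $T$, so no additional work is needed to describe $P$ itself.

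To realise the correct linear order of pertinent components inside each forest, I would, during the DFS, append each newly discovered component to the end of a list $L_j$ keyed by its $R_j$-membership. After the DFS is complete, I would sweep over $L_1, L_2, \dots, L_k$ in order and, for every $P \in L_j$, append $P$ at the end of its forest $\mathcal F_{b(P)\cdot X}$. This automatically places components with smaller $R_j$ first, matching the primary ordering rule, and it leaves components that share both bucket and $R_j$ in DFS-visit order.

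The hard part will be verifying that DFS-visit order implements the LCA-based tie-breaker. Given two components $P_1,P_2$ in the same bucket and the same $R_j$, I would let $x$ be the lowest common ancestor of $r(P_1)$ and $r(P_2)$ in $T$ and let $x_1,x_2$ be the children of $x$ on the paths to these roots. Because the DFS follows the Phase-4 left-to-right order of the children of $x$, it fully processes the subtree rooted at $x_1$ before entering the one rooted at $x_2$; hence $r(P_1)$ is visited before $r(P_2)$ exactly when $x_1$ precedes $x_2$, which matches the required ordering. As both the DFS and the sweep over $L_1,\dots,L_k$ touch each vertex and each pertinent component a constant number of times, the total running time is $O(n)$.
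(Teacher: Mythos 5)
Your proposal is correct and follows essentially the same route as the paper: a pre-order traversal of the ordered tree $T$ that appends each pertinent component's root to a list keyed by its $R_j$-index, followed by a scan of these lists in order $j=1,\dots,k$ that dispatches each component to the forest determined by $b(P)$ and $j \bmod 4$. Your explicit verification that DFS-visit order realizes the LCA-based tie-breaker is a welcome addition that the paper leaves implicit, but it does not change the underlying argument.
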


\begin{proof}
First, we initialize new lists $S_1,\dots,S_k$, as well as the rooted ordered forests $\mathcal F_{i\cdot X}$, to empty lists. Second, we perform in $O(n)$ time a pre-order traversal of $T$ such that the children of each vertex are visited according to their left-to-right order (recall that $T$ is a rooted ordered tree). Each time we encounter the root of a pertinent component $P\in R_j$, we append $P$ in $O(1)$ time to~$S_j$. Finally, we scan the lists $S_1,\dots,S_k$ in this order. During the scan of a list $S_j$, we append the pertinent component $P\in R_j$ rooted at the currently scanned root to the corresponding forest $\mathcal F_{i\cdot X}$; such a forest is determined in $O(1)$ time by looking at the value of $b(P)$ (this determines $i$) and by computing $j\mod 4$ (this determines $X$). 
\end{proof}


We are now ready to state the following main result.

%

\begin{theorem}\label{th:trees}
For any tree story $\langle T, \tau, W \rangle$ such that $T$ has $n$ vertices, it is possible to construct in $O(n)$ time a drawing story that is planar, straight-line, and lies on an~$(8W+1) \times (8W+1)$~grid.
\end{theorem}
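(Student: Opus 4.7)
The plan is to run one of $\sedrawing$-, $\swdrawing$-, $\nwdrawing$-, $\nedrawing$-{\sc Drawer} on every forest $\mathcal F_{i\cdot X}$ produced by Phase~5, and then translate every output by the vector $(4W,4W)$, so that the Cartesian origin around which the four drawing types are rotationally related is mapped to the center of the target grid $[0,8W]\times[0,8W]$. The position thus assigned to a vertex $v\in V(T)$ is reused in every drawing $\Gamma_t$ of the story that contains $v$; edges are straight-line segments, and each $\Gamma_t$ is the restriction of this global placement to $V(G_t)$. A direct calculation shows that, after the translation, the $\sedrawing$-drawings are confined to $[4W,5W-1]\times[6W+2,8W]$, the $\swdrawing$-drawings to $[6W+2,8W]\times[3W+1,4W]$, the $\nwdrawing$-drawings to $[3W+1,4W]\times[0,2W-2]$, and the $\nedrawing$-drawings to $[0,2W-2]\times[4W,5W-1]$; these four rectangles are pairwise disjoint and all fit inside $[0,8W]\times[0,8W]$, so the drawing story lies on an $(8W+1)\times(8W+1)$ grid. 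The $O(n)$ time bound follows by combining \cref{le:edge-addition} and \cref{le:construction-forests} (with the trivial $O(n)$ cost of the remaining phases) and \cref{le:quadrant-drawings}, whose symmetric versions draw each forest in time linear in its size; since the forests partition $V(T)$, the total is $O(n)$.

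The bulk of the argument is to prove that each $\Gamma_t$ is planar. By \cref{prop:supergraph}, the vertices of $G_t$ lie in $B_{i-1}\cup B_i$ for some $i$, hence in at most four forests, one of each quadrant type. These are drawn planarly by \cref{le:quadrant-drawings} and its rotations, in four disjoint rectangles, so intra-forest edges create no crossings. By \cref{obs:edge-assignment}, every inter-forest edge $(u,v)$ connects a vertex $u$ of one forest to a root $v$ of a pertinent component in the forest placed at the ``next'' quadrant in the cyclic order $\sedrawing\to\swdrawing\to\nwdrawing\to\nedrawing\to\sedrawing$; after translation, the segment supporting the roots of that target forest coincides exactly with the segment whose containment in the appropriate wedge at $u$ is guaranteed by \cref{prop:containment} (or its rotated analog). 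Thus $(u,v)$ lies in that wedge, which by \cref{prop:wedge-free-internally} crosses no edge of the forest containing $u$; a separate check using the lines $x=4W$ and $y=4W$ shows that $(u,v)$ does not enter the other three rectangles either.

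I expect the main obstacle to be ruling out crossings between two inter-forest edges $(u_1,v_1)$ and $(u_2,v_2)$ that are directed into the same target forest, since $v_1,v_2$ are collinear on the root segment and such a crossing would correspond to an inversion between the order of the tails and the order of the heads. The plan is to exploit the orderings set up in Phases~4--5: within $\mathcal F_{i\cdot X}$, Phase~4 places the children of every vertex that belong to $R_j$ before those in $R_{j+1}$, while Phase~5 orders the pertinent components of $\mathcal F_{(i+1)\cdot Y}$ using the left-to-right order, in $T$, of the children of their lowest common ancestors. Combined with \cref{prop:bottom-to-top-trees} and \cref{prop:bottom-to-top-internal}, this should guarantee that the order of the roots along the target segment mirrors the order in which the corresponding parent vertices are encountered in $\mathcal F_{i\cdot X}$, precluding any inversion and hence any crossing between inter-forest edges.
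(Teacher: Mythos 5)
Your construction is the paper's construction, up to a harmless translation by $(4W,4W)$ (the paper keeps the origin at the center and obtains the grid $[-4W,4W]\times[-4W,4W]$, which has the same width $8W+1$). Your rectangle computations, the $O(n)$ accounting, the reduction of planarity to consecutive bucket pairs via \cref{prop:supergraph}, the disposal of intra-forest crossings via the four disjoint rectangles, and the treatment of an inter-forest edge against the forest containing its tail via \cref{prop:containment} and \cref{prop:wedge-free-internally} of \cref{def:quadrant-drawings} all match the paper's Cases~(i)--(iii) of \cref{cl:planarity}.

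The gap is in the case you correctly single out as the main obstacle: two inter-forest edges $(u,v)$ and $(w,z)$ directed into the same target forest. Your plan is to show that the order of the heads $v,z$ along the root segment mirrors the order of the tails $u,w$ in the source forest, and to conclude that ``precluding any inversion'' precludes any crossing. That last inference is false as stated. Take $W=3$ and (in the paper's coordinates) $u=(2,10)$, $w=(0,11)$ inside the $\sedrawing$-rectangle $[0,2]\times[8,12]$, and $v=(8,0)$, $z=(12,0)$ on the root segment $\big[(8,0),(12,0)\big]$: here $w$ lies above $u$ and $z$ lies to the right of $v$, yet the segments $[u,v]$ and $[w,z]$ cross (at $x=28/9$). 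What excludes this configuration is not the ordering but the wedge condition: $u=(2,10)$ lies inside $S_{\sewedge}(w)$, violating \cref{prop:wedge-free-internally}. Accordingly, the paper's argument has two steps: first it proves the order-consistency you describe (\cref{cl:w-above-u}, which itself requires a case analysis on whether $u$ and $w$ lie in the same pertinent component and uses that the lower tail cannot be the lowest common ancestor of $v$ and $z$ because the Phase-4 rule places same-bucket children before next-bucket children); second, it observes that given this ordering a crossing could occur only if $u$ lay to the right of $[w,z]$, and since $[w,z]$ lies inside $S_{\sewedge}(w)$ by \cref{prop:containment}, that would put $u$ inside $S_{\sewedge}(w)$, contradicting \cref{prop:wedge-free-internally} of \cref{def:quadrant-drawings}. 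You have all the ingredients on the table, but without this final wedge-based step the argument does not close.
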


\begin{figure}[h!]
\centering
\includegraphics[width=0.9\textwidth]{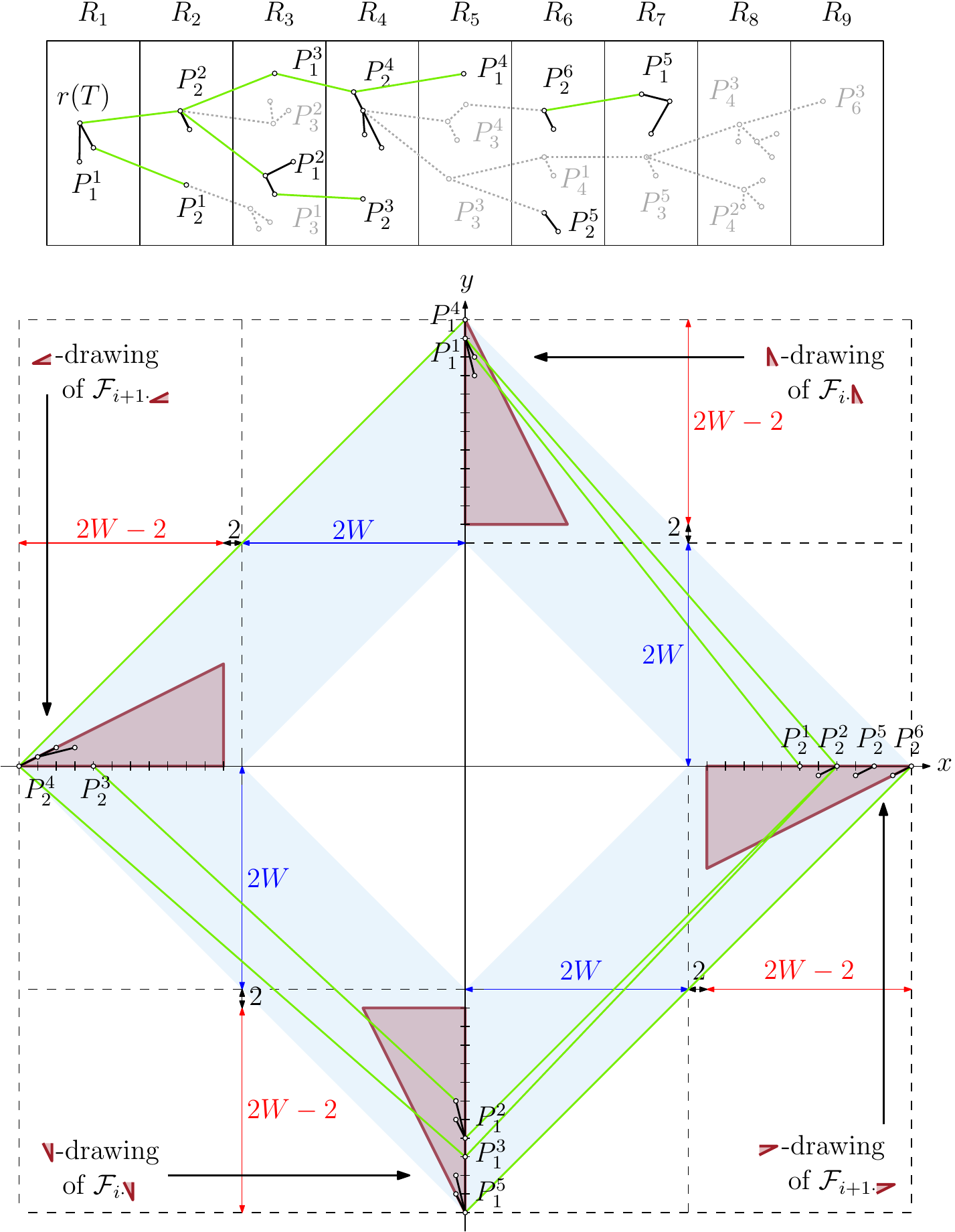}
\caption{Illustration for the proof of \cref{th:trees}, with $W=12$. The upper part of the figure shows the rooted ordered tree $T$; vertices and edges that are not in $T[B_{1,2}]$ are gray. A pertinent component $P^j_i$ of $T$ belongs to the bucket $B_i$; further, the index $j$ represents the order of the components in the corresponding rooted forests. The lower part of the figure shows the drawing $\Gamma[B_{1,2}]$ of $T[B_{1,2}]$ constructed by the algorithm.}\label{fi:treesExample}
\end{figure}

\begin{proof}
We construct a planar straight-line drawing story $\Gamma$ of $\langle T, \tau, W \rangle$. We perform {\sc Phases 1}--{\sc 5} in order to construct the rooted ordered forests $\mathcal F_{i\cdot X}$ with $i\in \{1,\dots,h\}$ and $X \in \{\protect\sedrawing,\protect\swdrawing, \protect\nwdrawing, \protect\nedrawing\}$. Note that {\sc Phase 2} introduces in $T$ some dummy edges; after the construction of $\Gamma$, such edges are removed from the actual drawing story of $\langle T, \tau, W \rangle$. Since $\Gamma$ is a straight-line drawing, it suffices to describe how to assign coordinates to the vertices of $T$; see \cref{fi:treesExample}. For each rooted ordered forest $\mathcal F_{i\cdot X}$, we apply the algorithm $X$-{\sc Drawer} in order to construct an $X$-drawing $\Gamma_{i\cdot X}$ of  $\mathcal F_{i\cdot X}$. We let the coordinates of each vertex $v$ of $\mathcal F_{i\cdot X}$ in $\Gamma$ coincide with the coordinates of $v$ in $\Gamma_{i\cdot X}$.

%


We now prove that $\Gamma$ satisfies the properties in the statement of the theorem. By \cref{prop:grid} of \cref{def:quadrant-drawings}, a $\sedrawing$-drawing lies on the $[0,4W] \times [0,4W]$ grid. Similarly, a $\swdrawing$-, a $\nwdrawing$-, and a $\nedrawing$-drawing lies on the $[0,4W] \times [-4W,0]$ grid, on the $[-4W,0] \times [-4W,0]$ grid, and on the $[-4W,0] \times [0,4W]$ grid, respectively. Thus, $\Gamma$ lies on the $[-4W,4W] \times [-4W,4W]$ grid. 

Concerning the running time for the construction of $\Gamma$, we have that the initial modification of $T$, which ensures that $|b(u)-b(v)|\leq 1$ for every edge $(u,v)$ of $T$, can be done in $O(n)$ time, by Lemma~\ref{le:edge-addition}. The labeling $b(u)$ of each vertex $u$ of $T$ is easily done in $O(n)$ time. The construction of the sets $R_1,\dots,R_k$ can be accomplished by an $O(n)$-time traversal of $T$ starting from $r(T)$. The construction of the rooted ordered forests $\mathcal F_{i\cdot X}$, with $i=1,\dots,h$ and with $X \in \{\sedrawing,\swdrawing, \nwdrawing, \nedrawing\}$, can be performed in total $O(n)$ time by \cref{le:construction-forests}. Finally, by \cref{le:quadrant-drawings}, the algorithm $X$-{\sc Drawer} runs in linear time in the size of its input $\mathcal F_{i\cdot X}$.

It remains to show that each drawing $\Gamma_t \in \Gamma$ is planar. We exploit the following lemma. Let $B_{i,i+1}=B_i \cup B_{i+1}$. 

\begin{lemma}  \label{cl:planarity}
For $i=1,\dots,h-1$, \mbox{the drawing $\Gamma[B_{i,i+1}]$ of $T[B_{i,i+1}]$ is planar.}
\end{lemma}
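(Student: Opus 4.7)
The plan is to prove the lemma by showing that no two edges of $T[B_{i,i+1}]$ cross in $\Gamma[B_{i,i+1}]$. Assume WLOG that $i$ is odd (the case $i$ even is symmetric). Then the vertices of $B_i$ are drawn in either $\mathcal{F}_{i \cdot \sedrawing}$ (NE quadrant) or $\mathcal{F}_{i \cdot \nwdrawing}$ (SW), and the vertices of $B_{i+1}$ in either $\mathcal{F}_{(i+1) \cdot \swdrawing}$ (SE) or $\mathcal{F}_{(i+1) \cdot \nedrawing}$ (NW). Since the four forests lie in four pairwise disjoint quadrants (save for the axes) and each drawing is planar by \cref{le:quadrant-drawings}, it suffices to argue about edges that connect two different forests.

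Next, any edge of $T$ between two distinct pertinent components joins pertinent components adjacent in the pertinent-component tree, whose $R$-indices differ by $1$; by the modulo-$4$ assignment from \textsc{Phase 5}, their forests thus occupy two adjacent quadrants sharing a coordinate axis. Among the two endpoints of such an edge $(u,v)$, the one belonging to the pertinent component with the larger $R$-index is, by construction, the root of that pertinent component, and by Property (ii) of \cref{def:quadrant-drawings} (together with its symmetric analogues for the three rotated drawing types) it lies on the axis shared by the two quadrants. This yields four possible types of inter-forest edges, one per pair of adjacent quadrants. For each such edge $(u,v)$, the wedge property (\cref{prop:wedge-free-internally}) together with \cref{prop:containment} implies that the whole segment $[u,v]$ is contained in the wedge $S_{\sewedge}(u)$ (or its rotated counterpart), so $(u,v)$ does not cross any edge of $u$'s forest; symmetrically, $[u,v]$ meets $v$'s forest only at $v$, and therefore does not cross any edge of $v$'s forest either.

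The main obstacle is to show that two inter-forest edges do not cross each other. For two such edges lying in different pairs of adjacent quadrants, I would argue that the two corresponding closed regions intersect only along a single axis; since all roots are placed at distance at least $2W+2$ from the origin (by \cref{def:quadrant-drawings} and its rotated analogues), any common point must be a shared endpoint and not a proper crossing. For two edges in the same pair, say both connecting $\mathcal{F}_{i \cdot \sedrawing}$ and $\mathcal{F}_{(i+1) \cdot \swdrawing}$, the roots in the former are ordered bottom-to-top along the positive $y$-axis and those in the latter left-to-right along the positive $x$-axis, and by \textsc{Phase 5} both orderings are induced by a single pre-order traversal of $T$. Combined with the order-preserving property of each forest's drawing, this should ensure that the combinatorial cyclic order of edges around every vertex in $\Gamma[B_{i,i+1}]$ coincides with the planar embedding of $T[B_{i,i+1}]$ inherited from the ordered tree $T$, which is itself planar; hence $\Gamma[B_{i,i+1}]$ is planar.
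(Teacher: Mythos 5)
Your overall strategy is the same as the paper's: confine each forest to its own quadrant, use the planarity of the individual $X$-drawings for intra-forest pairs, and use the wedge condition (\cref{prop:wedge-free-internally} of \cref{def:quadrant-drawings} together with \cref{prop:containment}) to show that an inter-bucket edge stays clear of the forest containing its parent endpoint. Those parts are sound. The gap is in the case you yourself identify as the main obstacle: two inter-bucket edges $(u,v)$ and $(w,z)$ joining the \emph{same} pair of forests, say $u,w\in \mathcal F_{i \cdot \sedrawing}$ and $v,z$ roots of pertinent components of $\mathcal F_{i+1 \cdot \swdrawing}$ lying on the positive $x$-axis. Here you assert that, since the root orderings come from ``a single pre-order traversal,'' the rotation system of $\Gamma[B_{i,i+1}]$ agrees with a planar embedding of the tree, ``hence $\Gamma[B_{i,i+1}]$ is planar.'' This does not follow: for a straight-line drawing, having a rotation system consistent with a planar embedding does not by itself exclude crossings; one needs a geometric argument that the regions occupied by the relevant subdrawings and edges are disjoint. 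Moreover, the claimed order consistency is not automatic. It is precisely the content of the paper's \cref{cl:w-above-u} (if $z$ lies to the right of $v$ then $w$ lies above $u$), whose proof requires the lowest-common-ancestor case analysis, \cref{prop:bottom-to-top-sub-trees,prop:bottom-to-top-internal} of \cref{def:quadrant-drawings}, and---critically---the \textsc{Phase 4} rule that same-bucket children precede next-bucket children in the left-to-right order (this is what rules out the bad configuration in which $u$ is the lowest common ancestor of $v$ and $z$). Your proposal never invokes that rule, and with an arbitrary child ordering the consistency claim is false.

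Even granting the order consistency, you still need one more geometric step that is missing from your sketch: knowing that $w$ lies above $u$ and $z$ lies to the right of $v$ does not immediately prevent the segments from crossing. The paper closes this by observing that $(w,z)$ lies in the wedge $S_{\sewedge}(w)$ except at $w$ (by \cref{prop:containment}), so a crossing would force $u$ into $S_{\sewedge}(w)$, contradicting \cref{prop:wedge-free-internally} of \cref{def:quadrant-drawings}. You apply the wedge machinery to edge-versus-forest incidences but not to this edge-versus-edge case, which is exactly where it is indispensable. To repair the proof you should state and prove the analogue of \cref{cl:w-above-u} and then finish with the wedge argument, rather than appealing to the inherited embedding of $T$.
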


\begin{proof}
Suppose that $i$ is odd; the case in which $i$ is even can be treated analogously.
Let $(u,v)$ and $(w,z)$ be any two edges of $T[B_{i,i+1}]$. We prove that $(u,v)$ and $(w,z)$ do not cross in $\Gamma[B_{i,i+1}]$. Analogous and simpler arguments prove that no edge overlaps with a vertex in $\Gamma[B_{i,i+1}]$.

We distinguish four cases.
\begin{description}
\item[\bf {Case} (i):] $\{u,v\} \subseteq B_i$ or $\{u,v\} \subseteq B_{i+1}$, and $\{w,z\} \subseteq B_i$ or $\{w,z\} \subseteq B_{i+1}$,
\item[\bf {Case} (ii):] $\{u,v,w\} \subseteq B_i$ and $z \in B_{i+1}$, 
\item[\bf {Case} (iii):] $u \in B_i$ and $\{v,w,z\} \subseteq B_{i+1}$, and
\item[\bf {Case} (iv):] $\{u,w\} \subseteq B_i$ and $\{v,z\} \subseteq B_{i+1}$.
\end{description}

\paragraph{{Case}~(i).} \cref{le:quadrant-drawings} ensures the planarity of the drawings $\Gamma_{i \cdot \sedrawing}$, $\Gamma_{i+1 \cdot \swdrawing}$, $\Gamma_{i \cdot \nwdrawing}$, and $\Gamma_{i+1 \cdot \nedrawing}$. Further, \cref{prop:grid} of \cref{def:quadrant-drawings} ensures that such drawings do not overlap with each other. It follows that the drawing of $\mathcal F_{i \cdot \sedrawing} \cup \mathcal F_{i+1 \cdot \swdrawing} \cup \mathcal F_{i \cdot  \nwdrawing} \cup \mathcal F_{i+1 \cdot \nedrawing}$ in $\Gamma[B_{i,i+1}]$ is planar. By \cref{obs:edge-assignment}\darkblue{a}, we have that $(u,v)$ and $(w,z)$ belong to $\mathcal F_{i \cdot \sedrawing} \cup \mathcal F_{i+1 \cdot \swdrawing} \cup \mathcal F_{i \cdot  \nwdrawing} \cup \mathcal F_{i+1 \cdot \nedrawing}$, hence they do not cross each other in $\Gamma[B_{i,i+1}]$.

%
%
\paragraph{{Case}~(ii).} If $\{u,v,w\} \subseteq B_i$, then $u$, $v$, and $w$ belong to $\mathcal F_{i \cdot \sedrawing}\cup \mathcal F_{i \cdot  \nwdrawing}$. Suppose that the edge $(u,v)$ belongs to $\mathcal F_{i \cdot \sedrawing}$, the case in which it belongs to $\mathcal F_{i \cdot  \nwdrawing}$ is symmetric. 

Suppose first that $w$ belongs to $\mathcal F_{i \cdot  \nwdrawing}$. Then, by \cref{obs:edge-assignment}\darkblue{b}, we have that either $z$ is the root of a pertinent component in $\mathcal F_{i+1 \cdot  \nedrawing}$ or $z$ belongs to $\mathcal F_{i+1 \cdot  \swdrawing}$. In either case, by \cref{prop:grid} of \cref{def:quadrant-drawings}, we have that $\Gamma_{i \cdot \sedrawing}$ (and in particular the edge $(u,v)$) lies above the $x$-axis, while by \cref{prop:bottom-to-top-trees,prop:grid} of \cref{def:quadrant-drawings} the edge $(w,z)$ lies below or on the $x$-axis. Thus, $(u,v)$ and $(w,z)$ do not cross each other  in $\Gamma[B_{i,i+1}]$. 

Suppose next that $w$ belongs to $\mathcal F_{i \cdot  \sedrawing}$. By \cref{obs:edge-assignment}\darkblue{b}, we have that either $z$ is the root of a pertinent component in $\mathcal F_{i+1 \cdot  \swdrawing}$, or $z$ belongs to $\mathcal F_{i+1 \cdot  \nedrawing}$ and $w$ is the root of a pertinent component in $\mathcal F_{i \cdot  \sedrawing}$. In the latter case, $(u,v)$ lies to the right of the $y$-axis, except possibly for one of its end-vertices which might be on the $y$-axis, while $(w,z)$ lies to the left of the $y$-axis, except for $w$ which lies on the $y$-axis. By \cref{prop:bottom-to-top-trees} of \cref{def:quadrant-drawings}, the edges $(u,v)$ and $(w,z)$ do not cross each other  in $\Gamma[B_{i,i+1}]$. In the former case, by \cref{prop:wedge-free-internally} of \cref{def:quadrant-drawings}, we have that the interior of the wedge $S_{\sewedge}(w)$ does not intersect $\Gamma_{i \cdot  \sedrawing}$; further, by \cref{prop:containment}, the segment $\big[(2W+2,0),(4W,0)\big]$ lies in the interior of the wedge $S_{\sewedge}(w)$. Since, by \cref{prop:bottom-to-top-trees} of \cref{def:quadrant-drawings}, the vertex $z$ lies on the segment $\big[(2W+2,0),(4W,0)\big]$, it follows that the edge $(w,z)$ lies in the interior of $S_{\sewedge}(w)$, except for $w$. Thus, $(u,v)$ and $(w,z)$ do not cross each other  in $\Gamma[B_{i,i+1}]$. 

\paragraph{{Case}~(iii).} The treatment of this case is symmetric to the one of {\bf {Case}~(ii)}.

\paragraph{{Case}~(iv).} Since $\{u,w\} \subseteq B_i$, then either $u$ and $w$ belong to the same rooted ordered forest $\mathcal F_{i \cdot \sedrawing}$ or $\mathcal F_{i \cdot  \nwdrawing}$, or they belong one to  $\mathcal F_{i \cdot \sedrawing}$ and one to  $\mathcal F_{i \cdot  \nwdrawing}$.

In the latter case, assume that $u$ belongs to  $\mathcal F_{i \cdot \sedrawing}$ and $w$ to $\mathcal F_{i \cdot  \nwdrawing}$. By \cref{obs:edge-assignment}\darkblue{b}, each of $v$ and $z$ belongs to either $\mathcal F_{i+1 \cdot \swdrawing}$ or $\mathcal F_{i+1 \cdot  \nedrawing}$. If $v$ belongs to $\mathcal F_{i+1 \cdot \swdrawing}$ and $z$ to $\mathcal F_{i+1 \cdot  \nedrawing}$, by \cref{prop:grid} of \cref{def:quadrant-drawings} we have that $(u,v)$ lies to the right of the $y$-axis, except possibly for $u$ which might lie on the positive $y$-axis, while $(w,z)$ lies to the left of the $y$-axis, except possibly for $w$ which might lie on the negative $y$-axis. Thus, $(u,v)$ and $(w,z)$ do not cross each other  in $\Gamma[B_{i,i+1}]$. If $v$ belongs to $\mathcal F_{i+1 \cdot \nedrawing}$ and $z$ to $\mathcal F_{i+1 \cdot  \swdrawing}$, by \cref{prop:grid} of \cref{def:quadrant-drawings} we have that $(u,v)$ lies above the $x$-axis, except possibly for $v$ which might lie on the negative $x$-axis, while $(w,z)$ lies below the $x$-axis, except possibly for $z$ which might lie on the positive $x$-axis. Thus, $(u,v)$ and $(w,z)$ do not cross each other  in $\Gamma[B_{i,i+1}]$. If $v$ and $z$ both belong to $\mathcal F_{i+1 \cdot \swdrawing}$, then by \cref{obs:edge-assignment}\darkblue{b} we have that $v$ is the root of a pertinent component of $\mathcal F_{i+1 \cdot \swdrawing}$; hence by \cref{prop:grid,prop:bottom-to-top-trees} of \cref{def:quadrant-drawings} we have that $(u,v)$ lies above the $x$-axis, except for $v$ which lies on the $x$-axis, while $(w,z)$ lies below the $x$-axis, except possibly for $z$ which might lie on the $x$-axis. Thus, $(u,v)$ and $(w,z)$ do not cross each other  in $\Gamma[B_{i,i+1}]$. The case in which $v$ and $z$ both belong to $\mathcal F_{i+1 \cdot \nedrawing}$ is symmetric to the previous one.

In the former case, assume that $u$ and $w$ both belong to $\mathcal F_{i \cdot \sedrawing}$. We distinguish four cases: (a) $v$ belongs to $\mathcal F_{i+1 \cdot \nedrawing}$ and $z$ belongs to $\mathcal F_{i+1 \cdot \swdrawing}$; (b) $v$ belongs to $\mathcal F_{i+1 \cdot \swdrawing}$ and $z$ belongs to $\mathcal F_{i+1 \cdot \nedrawing}$; (c) $v$ and $z$ both belong to $\mathcal F_{i+1 \cdot \swdrawing}$; and (d) $v$ and $z$ both belong to $\mathcal F_{i+1 \cdot \nedrawing}$. Cases~(a) and~(b) are symmetric to the already discussed case in which $u$ belongs to  $\mathcal F_{i \cdot \sedrawing}$, $w$ belongs to $\mathcal F_{i \cdot  \nwdrawing}$, and $v$ and $z$ belong to $\mathcal F_{i+1 \cdot \swdrawing}$. Further, case (d) is symmetric to case (c), which we discuss next.

%
%
By \cref{obs:edge-assignment}\darkblue{b}, we have that $v$ and $z$ are the roots of two pertinent components $P_v$ and $P_z$ of $\mathcal F_{i+1 \cdot \swdrawing}$, respectively. By \cref{prop:bottom-to-top-trees} of \cref{def:quadrant-drawings}, the vertices $v$ and $z$ lie on the segment $\big[(2W+2,0),(4W,0)\big]$. Assume w.l.o.g.\ that $z$ lies to the right of $v$. We have the following.

\begin{claim} \label{cl:w-above-u}
The vertex $w$ lies above the vertex $u$ in $\Gamma[B_{i,i+1}]$. 
\end{claim}

\begin{proof}
Let $R_j$ and $R_k$ be the sets containing $v$ and $z$, respectively.
By the construction of the ordering of $\mathcal F_{i+1 \cdot \swdrawing}$ and since $z$ lies to the right of $v$, we have that two cases are possible.	
	
In the first case we have $j<k$. By \cref{obs:edge-assignment}\darkblue{b}, we have that $u$ and $w$ belong to $R_{j-1}$ and $R_{k-1}$, respectively. Since $j-1<k-1$, we have that $u$ and $w$ belong to distinct pertinent components $P_u$ and $P_w$ of $T$, respectively, where $P_u$ precedes $P_w$ in $\mathcal F_{i \cdot \sedrawing}$. By \cref{prop:bottom-to-top-sub-trees} of \cref{def:quadrant-drawings}, we have that $w$ lies above $u$ in $\Gamma[B_{i,i+1}]$.

In the second case we have $j=k$. Let $x$ be the lowest common ancestor of $v$ and $z$ in $T$. Also, let $v'$ and $z'$ be the children of $x$ on the paths connecting $x$ with $v$ and $z$, respectively. Then we have that $v'$ precedes $z'$ in the left-to-right order of the children of $x$. Note that $j-1=k-1$ and that $x$ is also the lowest common ancestor of $u$ and $w$ in $T$, where $u$ lies on the path between $v$ and $x$ and $w$ lies on the path between $z$ and $x$. 

\begin{itemize}
	\item If $u$ and $w$ belong to distinct pertinent components $P_u$ and $P_w$ of $T$ then, as in the first case, $P_u$ precedes $P_w$ in $\mathcal F_{i \cdot \sedrawing}$. By \cref{prop:bottom-to-top-sub-trees} of \cref{def:quadrant-drawings} we have that $w$ lies above $u$ in $\Gamma[B_{i,i+1}]$.
	\item If $u$ and $w$ belong to the same pertinent component of $T$ and neither of them is $x$, then by \cref{prop:bottom-to-top-internal} of \cref{def:quadrant-drawings} we have that $w$ lies above $u$ in $\Gamma[B_{i,i+1}]$.
	\item If $u$ and $w$ belong to the same pertinent component of $T$ and $w=x$, then we have that $w$ lies above $u$ in $\Gamma[B_{i,i+1}]$ since $\Gamma_{i \cdot \sedrawing}$ is strictly-upward.
	\item Finally, note that $u\neq x$. Indeed, suppose, for a contradiction, that $u$ is the lowest common ancestor of $v$ and $z$. By the choice of the left-to-right ordering of the children of $u$ (given in {\sc Phase 4}), we have that $z'$ precedes $v$ in this ordering. Therefore, by the construction of the ordering of $\mathcal F_{i+1 \cdot \swdrawing}$ (given in {\sc Phase 5}), we have that $P_z$ precedes $P_v$ in $\mathcal F_{i+1 \cdot \swdrawing}$; by \cref{prop:bottom-to-top-trees} of \cref{def:quadrant-drawings}, this contradicts the assumption that $z$ lies to the right of $v$.
\end{itemize}   	
	
This concludes the proof of the claim.
\end{proof}

By \cref{cl:w-above-u} and by \cref{prop:grid} of \cref{def:quadrant-drawings}, we have that $w$ lies above $u$, which lies above $v$ and $z$; recall that these last two vertices lie on the segment $\big[(2W+2,0),(4W,0)\big]$. Hence, the edge $(u,v)$ crosses the edge $(w,z)$ if and only if $u$ lies to the right of the edge $(w,z)$. By \cref{prop:containment}, the  edge $(w,z)$ lies in the wedge $S_{\sewedge}(w)$, except at $w$. However, if $u$ lies to the right of the edge $(w,z)$, then $S_{\sewedge}(w)$ contains $u$, which contradicts \cref{prop:wedge-free-internally} of \cref{def:quadrant-drawings}. 
\end{proof}


By \cref{prop:supergraph} and \cref{cl:planarity}, we have that $\Gamma$ is planar. This concludes the proof of the theorem.
\end{proof}

\section{Conclusions and Open Problems}\label{se:conclusions}

We have shown how to draw dynamic trees with straight-line edges, using an area that only depends on the number of vertices that are simultaneously present in the tree, while maintaining planarity. This result is obtained by partitioning the vertices of the tree into buckets and by establishing topological and geometric properties for the forests induced by pairs of consecutive buckets. Further, we proved that this result cannot be generalized to arbitrary planar graphs.  

Several interesting problems arise from this research. 

\begin{enumerate}
\item Do other notable families of planar graphs admit a planar straight-line drawing story on a grid of size polynomial in $W$, or polynomial in $W$ and sublinear in $n$? In particular, do outerplanar and series parallel graphs (which include trees) admit such representations?

For instance, it is not difficult to extend \cref{th:paths} to show that any graph story $\langle G, \tau, W \rangle$ such that $G$ is a cycle admits a drawing story that is planar, straight-line, and lies on a $(4W +2)\times (4W+2)$ grid.

\item Which bounds can be shown for a drawing story of a dynamic graph that is not a tree, while each graph of the story is a forest? Or for a drawing story of a dynamic graph that is not a path, while each graph of the story is a linear forest?

\item Which bounds can be shown on the area requirements of planar poly-line drawing stories of graphs both in terms of $W$ and the number $b$ of bends allowed on each edge?

\item How about extending our model to graphs where a vertex is allowed to appear several times? Or to graphs where multiple vertices might appear at the same time (and still at any moment the graph contains at most $W$ vertices)?

\item Can our results be extended to the setting in which edges, and not vertices, appear over time?
\end{enumerate}
%
%

%
%
%
\bibliographystyle{splncs04}
\bibliography{bibliography}

\begin{thebibliography}{10}
\providecommand{\url}[1]{\texttt{#1}}
\providecommand{\urlprefix}{URL }
\providecommand{\doi}[1]{https://doi.org/#1}

\bibitem{DBLP:journals/algorithmica/AngeliniB19}
Angelini, P., Bekos, M.A.: Hierarchical partial planarity. Algorithmica
  \textbf{81}(6),  2196--2221 (2019). \doi{10.1007/s00453-018-0530-6},
  \url{https://doi.org/10.1007/s00453-018-0530-6}

\bibitem{DBLP:conf/vissym/0001B0W14}
Beck, F., Burch, M., Diehl, S., Weiskopf, D.: The state of the art in
  visualizing dynamic graphs. In: Borgo, R., Maciejewski, R., Viola, I. (eds.)
  Eurographics Conference on Visualization, EuroVis 2014 - State of the Art
  Reports, STARs, Swansea, UK, June 9-13, 2014. Eurographics Association
  (2014). \doi{10.2312/eurovisstar.20141174},
  \url{https://doi.org/10.2312/eurovisstar.20141174}

\bibitem{DBLP:journals/ipl/BinucciBBDGPPSZ12}
Binucci, C., Brandes, U., {Di Battista}, G., Didimo, W., Gaertler, M.,
  Palladino, P., Patrignani, M., Symvonis, A., Zweig, K.A.: Drawing trees in a
  streaming model. Inf. Process. Lett.  \textbf{112}(11),  418--422 (2012).
  \doi{10.1016/j.ipl.2012.02.011},
  \url{https://doi.org/10.1016/j.ipl.2012.02.011}

\bibitem{DBLP:reference/crc/BlasiusKR13}
Bl{\"{a}}sius, T., Kobourov, S.G., Rutter, I.: Simultaneous embedding of planar
  graphs. In: Tamassia, R. (ed.) Handbook on Graph Drawing and Visualization.,
  pp. 349--381. Chapman and Hall/CRC (2013)

\bibitem{DBLP:journals/comgeo/BrassCDEEIKLM07}
Bra{\ss}, P., Cenek, E., Duncan, C.A., Efrat, A., Erten, C., Ismailescu, D.,
  Kobourov, S.G., Lubiw, A., Mitchell, J.S.B.: On simultaneous planar graph
  embeddings. Comput. Geom.  \textbf{36}(2),  117--130 (2007).
  \doi{10.1016/j.comgeo.2006.05.006},
  \url{https://doi.org/10.1016/j.comgeo.2006.05.006}

\bibitem{DBLP:journals/siamcomp/CohenBTT95}
Cohen, R.F., {Di Battista}, G., Tamassia, R., Tollis, I.G.: Dynamic graph
  drawings: Trees, series-parallel digraphs, and planar st-digraphs. {SIAM} J.
  Comput.  \textbf{24}(5),  970--1001 (1995). \doi{10.1137/S0097539792235724},
  \url{https://doi.org/10.1137/S0097539792235724}

\bibitem{DBLP:conf/ciac/LozzoR15}
{Da Lozzo}, G., Rutter, I.: Planarity of streamed graphs. In: {CIAC}. LNCS,
  vol.~9079, pp. 153--166. Springer (2015)

\bibitem{DBLP:journals/combinatorica/FraysseixPP90}
{de Fraysseix}, H., Pach, J., Pollack, R.: How to draw a planar graph on a
  grid. Combinatorica  \textbf{10}(1),  41--51 (1990)

\bibitem{DBLP:conf/gd/DemetrescuBFLPP99}
Demetrescu, C., {Di Battista}, G., Finocchi, I., Liotta, G., Patrignani, M.,
  Pizzonia, M.: Infinite trees and the future. In: Kratochv{\'{\i}}l, J. (ed.)
  Graph Drawing, 7th International Symposium, GD'99, Stir{\'{\i}}n Castle,
  Czech Republic, September 1999, Proceedings. LNCS, vol.~1731, pp. 379--391.
  Springer (1999). \doi{10.1007/3-540-46648-7\_39},
  \url{https://doi.org/10.1007/3-540-46648-7\_39}

\bibitem{DBLP:journals/siamcomp/BattistaT96}
{Di Battista}, G., Tamassia, R.: On-line planarity testing. {SIAM} J. Comput.
  \textbf{25}(5),  956--997 (1996). \doi{10.1137/S0097539794280736},
  \url{https://doi.org/10.1137/S0097539794280736}

\bibitem{DBLP:journals/iandc/BattistaTV01}
{Di Battista}, G., Tamassia, R., Vismara, L.: Incremental convex planarity
  testing. Inf. Comput.  \textbf{169}(1),  94--126 (2001).
  \doi{10.1006/inco.2001.3031}, \url{https://doi.org/10.1006/inco.2001.3031}

\bibitem{Dolev}
Dolev, D., Leighton, F., Trickey, H.: Planar embedding of planar graphs.
  Advances in Computing Research  \textbf{2} (1984)

\bibitem{DBLP:conf/gd/FratiP07}
Frati, F., Patrignani, M.: A note on minimum-area straight-line drawings of
  planar graphs. In: Hong, S., Nishizeki, T., Quan, W. (eds.) 15th
  International Symposium on Graph Drawing ({GD} 2007). LNCS, vol.~4875, pp.
  339--344. Springer (2007)

\bibitem{DBLP:conf/gd/GoodrichP13a}
Goodrich, M.T., Pszona, P.: Streamed graph drawing and the file maintenance
  problem. In: Wismath, S.K., Wolff, A. (eds.) Graph Drawing - 21st
  International Symposium, {GD} 2013, Bordeaux, France, September 23-25, 2013,
  Revised Selected Papers. LNCS, vol.~8242, pp. 256--267. Springer (2013).
  \doi{10.1007/978-3-319-03841-4\_23},
  \url{https://doi.org/10.1007/978-3-319-03841-4\_23}

\bibitem{DBLP:reference/algo/Italiano16e}
Italiano, G.F.: Fully dynamic planarity testing. In: Encyclopedia of
  Algorithms, pp. 806--808 (2016). \doi{10.1007/978-1-4939-2864-4\_157},
  \url{https://doi.org/10.1007/978-1-4939-2864-4\_157}

\bibitem{DBLP:journals/im/Michail16}
Michail, O.: An introduction to temporal graphs: An algorithmic perspective.
  Internet Mathematics  \textbf{12}(4),  239--280 (2016)

\bibitem{DBLP:conf/stoc/Poutre94}
Poutr{\'{e}}, J.A.L.: Alpha-algorithms for incremental planarity testing
  (preliminary version). In: Leighton, F.T., Goodrich, M.T. (eds.) Proceedings
  of the Twenty-Sixth Annual {ACM} Symposium on Theory of Computing, 23-25 May
  1994, Montr{\'{e}}al, Qu{\'{e}}bec, Canada. pp. 706--715. {ACM} (1994).
  \doi{10.1145/195058.195439}, \url{https://doi.org/10.1145/195058.195439}

\bibitem{DBLP:journals/tse/ReingoldT81}
Reingold, E.M., Tilford, J.S.: Tidier drawings of trees. {IEEE} Trans. Software
  Eng.  \textbf{7}(2),  223--228 (1981). \doi{10.1109/TSE.1981.234519},
  \url{https://doi.org/10.1109/TSE.1981.234519}

\bibitem{DBLP:conf/gd/RextinH08}
Rextin, A., Healy, P.: A fully dynamic algorithm to test the upward planarity
  of single-source embedded digraphs. In: Tollis, I.G., Patrignani, M. (eds.)
  Graph Drawing, 16th International Symposium, {GD} 2008, Heraklion, Crete,
  Greece, September 21-24, 2008. Revised Papers. LNCS, vol.~5417, pp. 254--265.
  Springer (2008). \doi{10.1007/978-3-642-00219-9\_24},
  \url{https://doi.org/10.1007/978-3-642-00219-9\_24}

\bibitem{DBLP:conf/gd/Schaefer14}
Schaefer, M.: Picking planar edges; or, drawing a graph with a planar subgraph.
  In: Graph Drawing. LNCS, vol.~8871, pp. 13--24. Springer (2014)

\bibitem{DBLP:conf/gd/SkambathT16}
Skambath, M., Tantau, T.: Offline drawing of dynamic trees: Algorithmics and
  document integration. In: Hu, Y., N{\"{o}}llenburg, M. (eds.) Graph Drawing
  and Network Visualization - 24th International Symposium, {GD} 2016, Athens,
  Greece, September 19-21, 2016, Revised Selected Papers. LNCS, vol.~9801, pp.
  572--586. Springer (2016). \doi{10.1007/978-3-319-50106-2\_44},
  \url{https://doi.org/10.1007/978-3-319-50106-2\_44}

\bibitem{Whitney33}
Whitney, H.: Congruent graphs and the connectivity of graphs. American Journal
  of Mathematics  \textbf{54}(1),  150--168 (1932),
  \url{http://www.jstor.org/stable/2371086}

\end{thebibliography}
%




\end{document}